\let\emptyset\varnothing
\let\phi\varphi
\let\epsilon\varepsilon
\let\epsi\varepsilon
\renewcommand{\pi}{\uppi}
\DeclarePairedDelimiter{\intervaloo}{\lparen}{\rparen}
\DeclarePairedDelimiter{\intervalcc}{\lbrack}{\rbrack}
\DeclarePairedDelimiter{\abs}{\lvert}{\rvert}
\DeclarePairedDelimiter{\paren}{\lparen}{\rparen}
\DeclarePairedDelimiter{\norm}{\lVert}{\rVert}
    \newcommand{\VERT}[1]{#1|\mkern-1.5mu#1|\mkern-1.5mu#1|}
    \NewDocumentCommand{\tnorm}{ s o m }{
        \IfBooleanTF{#1}{
        \MT_delim_tnorm_star_wrapper:nnn%
            {\VERT{\bgroup\left}}{#3}{\VERT{\aftergroup\egroup\right}}
        }{
            \IfValueTF{#2}{
                \@nameuse{MT_delim_tnorm_nostarscaled_wrapper:nnn}%
                    {\VERT{\@nameuse {\MH_cs_to_str:N #2 l}}}
                    {#3}
                    {\VERT{\@nameuse {\MH_cs_to_str:N #2 r}}}
            }{
                \MT_delim_tnorm_nostarnonscaled_wrapper:nnn%
                    {\VERT{}}
                    {#3}
                    {\VERT{}}
            }
        }
    }
\DeclarePairedDelimiter{\commutator}{\lbrack}{\rbrack}
\DeclarePairedDelimiter{\List}{\{}{\}}
\DeclarePairedDelimiter{\ceil}{\lceil}{\rceil}
\DeclarePairedDelimiter{\floor}{\lfloor}{\rfloor}
\DeclarePairedDelimiter{\braket}{\langle}{\rangle}
\DeclarePairedDelimiterXPP{\dd}[1]{d}{\lparen}{\rparen}{}{#1}
\DeclarePairedDelimiterXPP{\dist}[1]{d}{\lparen}{\rparen}{}{#1}
\DeclarePairedDelimiterXPP{\diam}[1]{\SYMdiam}{\lparen}{\rparen}{}{#1}
\DeclarePairedDelimiterXPP{\pv}[1]{\operatorname{p.v.}}{\lparen}{\rparen}{}{#1}
\let\bBigg@@\bBigg@
\renewcommand{\bBigg@}[2]{{%
  \mathchoice
    {\bBigg@@{#1}{#2}}%
    {\bBigg@@{#1}{#2}}%
    {\big@size=.5\big@size\bBigg@@{#1}{#2}}%
    {\big@size=.3\big@size\bBigg@@{#1}{#2}}}}%
\DeclareDocumentCommand{\trace}{s o e{_} m}{
    \IfValueTF{#3}
        {\Tr_{#3}}
        {\Tr}%
    \IfBooleanTF{#1}
        {\paren*{#4}}
        {
            \IfValueTF{#2}
                {\paren[#2]{#4}}
                {\paren{#4}}%
        }%
}
\DeclareDocumentCommand{\Trace}{s o e{_} m}{
    \IfValueTF{#3}
        {\Tr_{#3}}
        {\Tr}%
    \IfBooleanTF{#1}
        {\paren*{#4}}
        {
            \IfValueTF{#2}
                {\paren[#2]{#4}}
                {\paren{#4}}%
        }%
}
\providecommand\given{}
\newcommand\SetSymbol[1][]{%
    \nonscript\,#1\vert
    \allowbreak
    \nonscript\,
    \mathopen{}}
\DeclarePairedDelimiterX\Set[1]\{\}{%
    \renewcommand\given{%
        \SetSymbol[\delimsize]}
    \nonscript\,
    #1
    \nonscript\,
}
\DeclarePairedDelimiterXPP{\pd}[1]{\scale_make_bigger_l:N\delimsize\lparen d}{\lparen}{\rparen}{\scale_make_bigger_r:N\delimsize\rparen}{#1}
\DeclarePairedDelimiterXPP{\pdist}[1]{\scale_make_bigger_l:N\delimsize\lparen d}{\lparen}{\rparen}{\scale_make_bigger_r:N\delimsize\rparen}{#1}
\DeclarePairedDelimiterXPP{\pdiam}[1]{\scale_make_bigger_l:N\delimsize\lparen \SYMdiam}{\lparen}{\rparen}{\scale_make_bigger_r:N\delimsize\rparen}{#1}
    \newcommand{\texorpdfstring}[2]{#1}
    \newcommand{\href}[2]{#2}
    \newcommand{\nameref}[1]{name}
    \newcommand{\hypersetup}[1]{}
    \newcommand{\orcidlink}[1]{ORCiD}
    \newcommand{\pdfbookmark}[1]{}
\crefname{equation}{}{}
\newcommand{\sumstack}[2][]{\ifstrempty{#1}{\sum_{\substack{#2}}}{\smashoperator[#1]{\sum_{\substack{#2}}}}}
\newcommand{\e}{{\mathrm{e}}}
\newcommand{\I}{\mathrm{i}}
 \newcommand{\R}{\mathbb{R}}
\newcommand{\C}{\mathbb{C}}
\newcommand{\N}{\mathbb{N}}
\newcommand{\Z}{\mathbb{Z}}
\newcommand{\Hi}{{\mathcal{H}}}
\newcommand{\alg}{\mathcal{A}}
\newcommand{\unit}{\mathbf{1}}
\newcommand{\Exp}[1]{\e^{#1}}
\DeclareMathOperator{\id}{id}
\newcommand\cexpsym{\mathbb{E}}
\DeclareDocumentCommand{\cexp}{s o m m}{%
    \cexpsym\c_math_subscript_token{#3}
    \IfBlankF{#4}
    {
        \exp_last_unbraced:Ne \paren {\IfBooleanT{#1}{*}\IfValueT{#2}{[\exp_not:N #2]}} {#4}
    }
}
\newcommand{\calI}{\mathcal{I}}
\newcommand{\calJ}{\mathcal{J}}
\newcommand{\calL}{\mathcal{L}}
\DeclareMathOperator{\Tr}{Tr}
\newcommand{\SYMdiam}{\operatorname{diam}}
\newcommand{\SYMd}{d}
\DeclareMathOperator{\supp}{supp}
\newcommand{\evol}[2]{\e^{#1}\,#2\,\e^{-#1}}
\newcommand{\Rhalf}{\mathup{R}}
\newcommand{\Uhalf}{\mathup{U}}
\newcommand{\upperboundary}{\mathup{upper}}
\newcommand{\lowerboundary}{\mathup{lower}}
\newcommand{\leftboundary}{\mathup{left}}
\newcommand{\rightboundary}{\mathup{right}}
\newcommand{\supint}{^{\mathup{int}}}
\newcommand{\Cvol}{\mathcal{C}_{\mathup{vol}}}
\newcommand{\Ccod}[1]{\mathcal{C}_{\mathup{vol},#1}}
\newcommand{\graphs}{\mathcal{G}(D,\Cvol)}
\newcommand{\phib}{\phi_\beta}
\newcommand{\phibhat}{\hat{\phi}_\beta}
\newcommand{\alphat}{\tilde{\alpha}}
\newcommand{\Psit}{\tilde{\Psi}}
\newcommand{\Lambdap}{\Lambda^{\smash{\mathup{pert}}}}
\newcommand{\suchthat}{\mathpunct{\ordinarycolon}}
\newcommand{\quadtext}[1]{\quad\text{#1}\quad}
\newcommand{\qquadtext}[1]{\quad\quadtext{#1}\quad}
\newcommand{\Alignindent}{\hspace*{2em}&\hspace*{-2em}}
\providecommand{\mathup}[1]{\mathrm{#1}}
\let\oldsetminus\setminus
\newbox\mybox
\newcommand\cutsetminus[1]{%
    \setbox\mybox\hbox{\(#1\oldsetminus\)}%
    \ht\mybox=0pt%
    \dp\mybox=0pt%
    \usebox\mybox%
}
\renewcommand\setminus{%
    \mathbin{%
        \mathchoice%
            {\displaystyle\oldsetminus}
            {\textstyle\oldsetminus}
            {\cutsetminus{\scriptstyle}}
            {\cutsetminus{\scriptscriptstyle}}
    }%
}
\newcounter{theoremenv}
\newcounter{theoremenvglobal}
\newlist{thmlist}{enumerate}{1}
\setlist[thmlist]{
    label=\textup{(\roman{thmlisti})},
    ref={(\roman{thmlisti})},
    nosep,
}
\renewcommand{\p@thmlisti}{\perh@ps{\protect\ref{auto-label:\arabic{theoremenvglobal}}}}
\DeclareRobustCommand{\perh@ps}[1]{#1}
\newcommand{\itemref}[1]{%
    \begingroup 
    \let\perh@ps\@gobble\ref{#1}%
    \endgroup
}
\renewcommand \thetheoremenv {\use:c{the\thmt@envname}}
\theoremstyle{plain}
\declaretheorem[
    name=Theorem,
]{theorem}
\declaretheorem[
    name=Lemma,
    sibling=theorem,
]{lemma}
\declaretheorem[
    name=Proposition,
    sibling=theorem,
]{proposition}
\theoremstyle{definition}
\declaretheorem[
    name=Assumption,
    sibling=theorem,
    qed=\(\diamond\),
]{assumption}
\theoremstyle{remark}
\declaretheorem[
    name=Remark,
    sibling=theorem,
    qed=\(\diamond\),
]{remark}
\addcolon\linebreak[2]#1}}
\title{Gaussian filters in quantum lattice systems: Applications to spectral flow, local perturbations, clustering, and the quantum Hall effect}
\newcommand{\emaillink}[1]{\href{mailto:#1}{#1}}
\author{%
    Sven Bachmann%
    \texorpdfstring{%
        \,\orcidlink{0000-0002-7645-2324}
        \footnote{
            \parbox[t]{.75\textwidth}{
                Department of Mathematics,
                The University of British Columbia \\
                1984 Mathematics Road, Vancouver, BC, Canada V6T 1Z2 \\
                \emaillink{sbach@math.ubc.ca}
            }
        }
    }{}%
    \and
    Zhiqian (Simon) Du%
    \texorpdfstring{%
        \footnote{
            \parbox[t]{.75\textwidth}{
                Department of Mathematics, University of Califonia, Davis
                \\
                Davis, 95616, USA
                \\
                \emaillink{simdu@ucdavis.edu}
            }
        }
    }{}%
     \and
    Martin Fraas%
    \texorpdfstring{%
        \,\orcidlink{0000-0003-0332-8557}
        \footnote{
            \parbox[t]{.75\textwidth}{
                Department of Mathematics, University of Califonia, Davis
                \\
                Davis, 95616, USA
                \\
                \emaillink{fraas@math.ucdavis.eduu}
            }
        }
    }{}%
    \and
    Tom Wessel%
    \texorpdfstring{%
        \,\orcidlink{0000-0001-7593-0913}
        \footnote{
            \parbox[t]{.75\textwidth}{
                Fachbereich Mathematik,
                Universität Tübingen,
                \\
                Auf der Morgenstelle 10,
                72076 Tübingen,
                Germany.
                \\
                \emaillink{tom.wessel@uni-tuebingen.de}
            }
        }
    }{}%
}
\date{25 August 2025}
\newcommand{\expeq}{\stackrel{\exp L}{=}}
\begin{document}

\bgroup
\hypersetup{hidelinks}
\maketitle\thispagestyle{empty}
\egroup

\begin{abstract}
    We consider the locality and spectral properties of the smearing
    \begin{equation*}
        \tau_f(A) = \int_{-\infty}^\infty \odif{t} \, f(t) \, \tau_t(A)
    \end{equation*}
    when applied to the dynamics \(\tau_t\) of quantum spin systems.
    While recent applications of this map have used superpolynomially but not exponentially decaying functions \(f\) to ensure exact spectral properties, we use here Gaussian filters.
    This improves the locality at the expense of errors on the spectral side.
    We propose a number of concrete applications, from quasi-adiabatic continuation to correlation decay, and exponential stability away from impurities.
    Finally, we discuss an application to the quantum Hall effect.
\end{abstract}

\section{Introduction}
\label{sec:intro}

Physical time evolution flows \(t\mapsto \tau_t(A) = \evol{\I H t}{A}\) are local, where the precise meaning of locality depends on the particular setup.
Propagation is strictly within the light cone when the underlying equation is the wave equation, it is diffusive for the heat equation, and the Lieb-Robinson bound~\cite{LR1972} provides an effective ballistic propagation bound for the Schr\"odinger equation on sufficiently regular lattices.
Among others, these results are essential in proving well-posedness for these equations.
The locality that originates in the Lieb-Robinson bound has proved crucial to understand both thermal phases and (topologically ordered) ground state phases, see~\cite{hastings2010locality} for many examples.
In this context, the smearing map%
\footnote{
    When the subscript of \(\tau\) is a real number, it denotes the time evolution and when the subscript is a function, it denotes this integral.
}
\begin{equation}
    \label{eq:smearing}
    \tau_f(A) = \int_{-\infty}^\infty \odif{t} \, f(t) \, \tau_t(A)
\end{equation}
has proved particularly useful, since the decay of \(f\) controls the locality of \(A\mapsto\tau_f(A)\) while its Fourier transform \(\widehat f\) controls the spectral properties of the map.
In fact, the (Arveson) spectrum of \(\tau\) is defined as the smallest subset \(\sigma\subset\R\) such that \(\tau_f\) vanishes for all \(f\) such that \(\mathrm{supp}(\widehat f)\cap\sigma = \emptyset\), see~\cite[Definition A.1]{BDN2016}.

Although it was not phrased exactly as such in its original formulation~\cite{hastings2004locality}, smearing of the type~\eqref{eq:smearing} is central to quasi-adiabatic continuation, also known as the spectral flow~\cite{hastings2010locality,bachmann2012automorphic}, a tool that has become one of the cornerstones of the analysis and classification of topological phases of matter, and of the related questions of adiabaticity and linear response theory~\cite{bachmann2018adiabatic,monaco2019adiabatic,teufel2020non}.
Understanding quasi-particle excitations of topologically ordered states also relies on this technique~\cite{BDN2016,bachmann2020many}.
In another guise, such smearings are instrumental in proving rapid decay of correlations for gapped quantum lattice systems~\cite{hastings2006spectral,nachtergaele2006lieb}, and to relate this to stability of ground state~\cite{bachmann2022stability}.

In the example of the spectral flow, the Fourier transform \(\widehat f\) is required to be discontinuous, which implies that \(f\) cannot decay exponentially at infinity.
While stretched-exponential or even superpolynomial decay is often sufficient and convenient for locality arguments~\cite{nachtergaele2019quasi}, one may wish to use an exponentially decaying function.
In fact, the original formulation of quasi-adiabatic continuation used a Gaussian filter.
The price to pay is that the spectral properties are not exact anymore.
Exponentially decaying filter functions have also proved useful in treating thermal states, see e.g.~\cite{Hastings2007,EO2019,CMTW2023}.

In this paper, we consider the use of Gaussian filter functions, providing exact bounds both on the spatial locality and on the spectral errors.
In particular, we define a spectral flow which is exponentially local and almost exact.
Our Ansatz is similar to the one used in~\cite{hastings2005quasiadiabatic}.
Here, the smearing is applied term-by-term on an interaction and the width of the Gaussian must be chosen in a spatially inhomogeneous way.
As expected, better locality yield worse spectral mapping properties.
The same methods allow us to provide an exponential version of the \enquote{local perturbations perturb locally} property.
What is more, a properly chosen Gaussian filter convolved with a step function yields exponential decay of correlations for gapped spectral patches whose width does not need to vanish in the thermodynamic limit.
As a corollary of all the above, we conclude that the Hall conductance is quantized in finite systems up to errors that are exponentially small in the system size.

This paper is organized as follows.
After a brief review of the Lieb-Robinson bound, we introduce in Section~\ref{sec:almost-inverse-Liouvillian} the almost inverse Liouvillian built using a Gaussian filter.
We prove bounds characterizing both its quasi-locality properties and its spectral property.
In Section~\ref{sec:almost-spectral-flow}, we use it to construct an exponentially local almost spectral flow and show that local perturbations that do not close the gap have exponentially small effect away from the perturbation.
Parallel ideas are developed in Section~\ref{sec:clustering} to prove exponential clustering for finite volume spectral patches of finite width.
Finally, Section~\ref{sec:QHE} discusses the application of these tools to the quantum Hall effect.


\section{Mathematical setup}
\label{sec:mathematical-setup}

We consider spin systems on finite \(D\)-regular graphs.
Therefore, let \(\Lambda\) be a finite set and denote by \(\SYMd\) the graph distance.
Clearly, there exist constants \(D\in \N_+\) and \(\Cvol > 0\) such that the volume of all balls
\begin{gather*}
    B_x(r) := \Set{ z\in \Lambda \given \dd{z,x} \leq r }
\shortintertext{scales such that}
    \abs{B_x(r)}
    \leq
    \Cvol \, (r+1)^D
    \qquad\text{for all \(x\in \Lambda\) and \(r \geq 0\).}
\end{gather*}
The set of graphs with this scaling for fixed \(D\) and \(\Cvol\) is denoted \(\graphs\).
With this definition, all finite subsets \(\Lambda \subset \Z^D\) are in the same \(\graphs\), and they can even have periodic boundary conditions in one or more directions.

For later purposes, we note that there exists a constant \(\Ccod{b,k} \geq 1\) such that for all sets \(Z\subset \Lambda\)
\begin{equation*}
    \abs{Z}^k \, \e^{-b \diam{Z}}
    \leq
    \Ccod{b,k}
    .
\end{equation*}
If \(b \leq D\), then
\begin{equation}
    \label{eq:Cvol}
    \Ccod{b,k}
    \leq
    \Cvol^k \, \paren[\big]{\tfrac{k\,D}{\e}}^{kD} \, b^{-k D} \, \e^b
    .
\end{equation}
Indeed, for any \(z\in Z\), we have
\begin{equation*}
    \abs{Z}^k \, \e^{-b \diam{Z}}
    \leq
    \abs{B_z(\diam{Z})}^k \, \e^{-b \diam{Z}}
    \leq
    \Cvol \, \sup_{n\geq 0} \, (n+1)^{kD} \, \e^{-b n}
    ,
\end{equation*}
and the bound~\eqref{eq:Cvol} follows since the supremum is attained at \(n = \min\List{0, k \, D / b - 1}\).

With every site \(x \in \Lambda\) one associates a finite-dimensional local Hilbert space \(\Hi_x := \C^q\) with the corresponding space of linear operators denoted by \(\alg_x:=\mathcal{B} \paren*{\C^q}\).
Moreover, we define the Hilbert space \(\Hi_{\Lambda}:= \bigotimes_{x \in \Lambda} \Hi_x\), and denote the algebra of bounded linear operators on~\(\Hi_{\Lambda}\) by \(\alg_{\Lambda} := \mathcal{B}(\Hi_{\Lambda})\).
Due to the tensor product structure, we have \(\alg_{\Lambda} = \bigotimes_{x \in \Lambda} \alg_x\).
Hence, for \(X \subset \Lambda\), any \(A \in \alg_{X}\) can be viewed as an element of~\(\alg_{\Lambda}\) by identifying~\(A\) with \(A \otimes \unit_{\Lambda \backslash X} \in \alg_{\Lambda}\), where~\(\unit_{\Lambda \backslash X}\) denotes the identity in~\(\alg_{\Lambda \backslash X}\).
This identification is always understood implicitly and for \(B\in \alg_\Lambda\) we denote by~\(\supp(B)\) the smallest \(Y\subset \Lambda\) such that~\(B\in \alg_Y\).

\begin{remark}
    While we formulate all our results for spin systems, one can use the same approach for fermionic lattice systems provided the algebra is that of even elements of the CAR, see for example~\cite{NSY2017}.
    In particular, \cref{thm:almost-spectral-flow,thm:LPPL,thm:exponential clustering} also hold for fermionic lattice systems.
\end{remark}

Let \(I\subset\R\) be a compact interval.
A (time-dependent) interaction on \(\Lambda\) is a function
\begin{equation}
    \Phi \colon I\times\List{Z\subset\Lambda} \to \alg_{\Lambda},
    \quad (t,Z)\mapsto \Phi (t,Z)\in\alg_Z
    \quadtext{with} \Phi(t,Z)=\Phi(t,Z)^*
    .
\end{equation}
For \(b \geq 0\), we define interaction norms
\begin{equation}
    \label{eq:interaction-norm}
    \norm{\Phi}_b
    :=
    \sup_{t\in I} \sup_{z\in \Lambda}
    \, \sumstack{Z\subset\Lambda\suchthat\\z\in Z}
    \, \norm{\Phi(t,Z)}
    \, \e^{b \diam{Z}}
    .
\end{equation}
An interaction~\(\Phi\) gives rise to the corresponding operator
\begin{equation*}
    H(t)
    :=
    \sum_{Z \subset \Lambda} \Phi(t,Z)
    ,
\end{equation*}
which generates the Heisenberg time-evolution \(\tau_{t,s}\) defined as the solution of
\begin{equation}
    \label{eq:definition-time-evolution}
    \odv{}{t} \, \tau_{s,t}(A)
    =
    \tau_{s,t} \paren[\big]{\I\, \commutator{H(t),A}}, \qquad \tau_{s,s}(A)
    =
    A
    ,
\end{equation}
for any \(A\in \alg_\Lambda\).

An important property of the time-evolution is its locality as captured by Lieb-Robinson bounds, which originated in~\cite{LR1972} and were generalized in~\cite{NS2006} and many other works.
We here state a version for the norm~\eqref{eq:interaction-norm}, whose time-independent version appeared in~\cite[Theorem~A.1]{bachmann2022trotter}, and the time-dependent one is in~\cite[Theorem~7.3.3]{Maier2022}.

\begin{lemma}[Lieb-Robinson bound]
    \label{lem:LRB}
    Let \(D\in \N\), \(\Cvol>0\), \(\Lambda\in \mathcal{G}(D,\Cvol)\) be finite, and \(b'>b>0\).
    Then, for all intervals \(I\subset \R\), time-dependent interactions \(\Phi\) such that \(\norm{\Phi}_{b'}<\infty\), disjoint subsets \(X\), \(Y \subset \Lambda\), observables \(A\in \alg_X\) and \(B\in \alg_Y\), and \(s\), \(t\in I\) it holds that
    \begin{equation}
        \label{eq:LR-bound-general}
        \norm[\big]{\commutator[\big]{\tau_{s,t}(A),B}}
        \leq
        2 \, \Ccod{1,b'-b}^{-1} \, \norm{A} \, \norm{B}
        \, \paren[\big]{
            \e^{b \, v \, \abs{t-s}}-1
        }
        \, D(X,Y)
        ,
    \end{equation}
    where \(v = 2 \, \Ccod{1,b'-b} \, \norm{\Phi}_{b'} / b\) is the Lieb-Robinson velocity and
    \begin{align*}
        D(X,Y)
        &:=
        \min \List[\bigg]{
            \sum_{x\in X} \e^{-b\dist{x,Y}},
            \sum_{y\in Y} \e^{-b\dist{y,X}}
        }
        \\&\phantom{:}\leq
        \min \List[\big]{\abs{X},\abs{Y}}
        \, \e^{-b\dist{X,Y}}
        .
    \end{align*}
\end{lemma}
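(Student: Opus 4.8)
The plan is to run the standard Lieb–Robinson iteration in the $b$-norm formulation of~\cite[Theorem~A.1]{bachmann2022trotter} and its time-dependent version~\cite[Theorem~7.3.3]{Maier2022}, while tracking the constants carefully so that they come out as stated. I may assume $t\ge s$, the case $t<s$ being analogous, and the minimum in $D(X,Y)$ will appear at the end by running the same computation with the roles of $A$ and $B$ exchanged. Fix $B\in\alg_Y$ and, for every finite $Z\subset\Lambda$, set $C_B(Z;t):=\sup\bigl\{\norm{\commutator{\tau_{s,t}(A),B}}:A\in\alg_Z,\ \norm{A}\le 1\bigr\}$. Differentiating $\commutator{\tau_{s,t}(A),B}$ using~\eqref{eq:definition-time-evolution}, observing that $\commutator{H(r),A}=\sum_{Z'\cap Z\ne\emptyset}\commutator{\Phi(r,Z'),A}$ for $A\in\alg_Z$, then integrating from $s$ to $t$ and estimating with $\norm{\commutator{\Phi(r,Z'),A}}\le 2\norm{\Phi(r,Z')}\norm{A}$ and $\commutator{\Phi(r,Z'),A}\in\alg_{Z\cup Z'}$, I obtain the integral inequality
\begin{equation*}
    C_B(Z;t)\le 2\norm{B}\,\mathbf{1}_{\{Z\cap Y\ne\emptyset\}}+2\int_s^t\sum_{Z'\cap Z\ne\emptyset}\norm{\Phi(r,Z')}\;C_B(Z\cup Z';r)\,\odif r .
\end{equation*}

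Iterating this inequality produces the Lieb–Robinson series for $C_B(X;t)$: pulling out the inhomogeneous term at the $n$-th level, that term equals $2\norm{B}\cdot 2^n$ times the time-simplex integral $\int_{s\le r_n\le\dots\le r_1\le t}\odif{r_1}\cdots\odif{r_n}=\abs{t-s}^n/n!$ times the chain sum $S_n(X,Y)$, the sum of $\prod_{j=1}^n\norm{\Phi(r_j,Z_j)}$ over all $Z_1,\dots,Z_n$ with $Z_1\cap X\ne\emptyset$, with $Z_{j+1}\cap(X\cup Z_1\cup\dots\cup Z_j)\ne\emptyset$ for $1\le j<n$, and with $(X\cup Z_1\cup\dots\cup Z_n)\cap Y\ne\emptyset$ (the $n=0$ term vanishes since $X\cap Y=\emptyset$). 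The whole estimate then reduces to the combinatorial bound
\begin{equation*}
    S_n(X,Y)\le\Ccod{1,b'-b}^{n-1}\,\norm{\Phi}_{b'}^{n}\sum_{x\in X}\e^{-b\,\dist{x,Y}}.
\end{equation*}
Granting this, summing over $n\ge 1$ gives $C_B(X;t)\le 2\,\Ccod{1,b'-b}^{-1}\norm{B}\bigl(\e^{bv\abs{t-s}}-1\bigr)\sum_{x\in X}\e^{-b\dist{x,Y}}$ with $v=2\,\Ccod{1,b'-b}\norm{\Phi}_{b'}/b$, and since $\norm{\commutator{\tau_{s,t}(A),B}}\le\norm{A}\,C_B(X;t)$ this is~\eqref{eq:LR-bound-general} with the first member of $D(X,Y)$. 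Redoing the computation with $B$ evolved and $A$ fixed — legitimate because $\norm{\commutator{\tau_{s,t}(A),B}}=\norm{\commutator{\tau_{t,s}(B),A}}$ and $\tau_{t,s}$ again solves an equation of the form~\eqref{eq:definition-time-evolution} with the same interaction norm — supplies the second member, hence~\eqref{eq:LR-bound-general}.

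The step I expect to be the main obstacle is the bound on $S_n$. I would split the weight in the norm as $\e^{b'\diam{Z}}=\e^{b\diam{Z}}\,\e^{(b'-b)\diam{Z}}$. Since the chain is connected and links $X$ to $Y$, following a sub-chain of consecutively overlapping sets and chaining the triangle inequality through their common points gives $\dist{x,Y}\le\sum_{j=1}^n\diam{Z_j}$ for a suitable $x\in X\cap Z_1$, so the factors $\e^{b\diam{Z_j}}$ pay for one $\e^{b\dist{x,Y}}$; the leftover weights $\e^{(b'-b)\diam{Z_j}}$ then have to absorb the volume factors produced by summing over the common point at each step, which is exactly what the volume estimate~\eqref{eq:Cvol} delivers, so that performing the $Z_j$-sums one at a time from the far end of the chain costs one factor of the form $\Ccod{1,b'-b}\norm{\Phi}_{b'}$ per step while the first sum over $Z_1\ni x$ costs only $\norm{\Phi}_{b'}$; a concluding sum over $x\in X$ yields the displayed inequality. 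The delicate point is to arrange the bookkeeping so that the condition \enquote{$Z_{j+1}$ meets $X\cup Z_1\cup\dots\cup Z_j$}, which a priori allows the chain to reattach to any earlier set or to $X$ rather than only to $Z_j$, still costs at most one volume factor per step and preserves the clean power $n-1$ of $\Ccod{1,b'-b}$; this is exactly the combinatorics worked out in~\cite[Theorem~A.1]{bachmann2022trotter} and~\cite[Theorem~7.3.3]{Maier2022}, which I would adapt.
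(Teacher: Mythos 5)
The paper does not prove this lemma; it cites \cite[Theorem~A.1]{bachmann2022trotter} and \cite[Theorem~7.3.3]{Maier2022}, so there is no in-paper proof to compare with. Your attempt has a genuine gap at the very first step. The integral inequality you derive,
\begin{equation*}
    C_B(Z;t)\le 2\norm{B}\,\unit_{\{Z\cap Y\ne\emptyset\}}
    +2\int_s^t\sum_{Z'\cap Z\ne\emptyset}\norm{\Phi(r,Z')}\,C_B(Z\cup Z';r)\,\odif r ,
\end{equation*}
is true, but it is too weak because the support on the right-hand side grows at every iteration, and the resulting chain sum does \emph{not} obey the bound you claim. Since the condition at step $j+1$ is only that $Z_{j+1}$ meets $X\cup Z_1\cup\dots\cup Z_j$, a chain may contain a wasted link that re-attaches to $X$ without moving toward $Y$. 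That link is summed over all $Z'$ touching $X$ and, because $Z'$ can be small, no exponential weight compensates: the step contributes a factor of order $\abs X$, not a constant. Concretely, take $\Lambda\subset\Z$, nearest-neighbour $\Phi$ with $\norm{\Phi(\{j,j{+}1\})}=1$, $X=\{-M,\dots,0\}$, $Y=\{3,4,\dots\}$. The chains $\bigl(Z_1,\{0,1\},\{1,2\},\{2,3\}\bigr)$, with $Z_1$ any pair touching $X$, satisfy all your conditions, so $S_4\ge M+2$; but $\sum_{x\in X}\e^{-b\dist{x,Y}}\le \e^{-3b}/(1-\e^{-b})$ is bounded uniformly in $M$. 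Hence the claimed $S_n\le\Ccod{1,b'-b}^{\,n-1}\norm{\Phi}_{b'}^{\,n}\sum_{x\in X}\e^{-b\dist{x,Y}}$ fails. More generally, each additional iteration level can waste a link at $X$, so the $n$-th term carries roughly $\abs X^{\,n-\dist{X,Y}}$ and the series yields a bound that is exponentially extensive in $\abs X$ rather than linear.

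The references you invoke do not run the iteration in the form you propose. The key additional idea is to replace your inequality by the sharper recursion
\begin{equation*}
    C_B(X;t)\le C_B(X;s)
    +2\int_s^t\sum_{Z\cap X\ne\emptyset}\norm{\Phi(r,Z)}\,C_B(Z;r)\,\odif r ,
\end{equation*}
in which only $Z$, not $X\cup Z$, appears on the right. Iterating this gives the genuinely linear chain condition $Z_1\cap X\ne\emptyset$, $Z_{j+1}\cap Z_j\ne\emptyset$, $Z_n\cap Y\ne\emptyset$, and then your chain-sum bookkeeping does go through and produces the stated constants. Obtaining this sharper recursion is not a matter of differentiating and taking norms: one first writes $\odv{}{t}\commutator{\tau_{s,t}(A),B}$, via the Jacobi identity (equivalently, via the interaction picture with respect to the part of $H$ commuting with $A$), as a commutator of $\commutator{\tau_{s,t}(A),B}$ with a self-adjoint operator plus a remainder. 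The commutator part is norm-preserving and is removed by a unitary conjugation; the remainder is controlled by $\norm{A}\,\norm{\commutator{\tau_{s,r}(\Phi(r,Z)),B}}$, i.e.\ by $C_B(Z;r)$ rather than $C_B(X\cup Z;r)$. This is the step your proposal misses, and without it the combinatorial bound on $S_n$ cannot be repaired.
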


In this statement and the rest of the paper, the interaction norm and the evolution implicitly depend on~\(\Lambda\) and this dependence is understood from the context.
Importantly, all the constants will be independent of the specific \(\Lambda\in \graphs\) and only depend on the lattice through \(D\)~and~\(\Cvol\).
In fact, one can define an interaction \(\Phi\) and the interaction norm \(\norm{\cdot}_b\) on an infinite lattice \(\Gamma=\Z^D\) and the restrictions \(\Phi\rvert_{\Lambda}\) to finite lattices \(\Lambda\subset \Gamma\) satisfy \(\norm{\Phi\rvert_\Lambda}_b \leq \norm{\Phi}_b\).
Our results are then uniform in \(\Lambda\subset \Gamma\).


\section{The almost inverse Liouvillian}
\label{sec:almost-inverse-Liouvillian}

For \(\beta > 0\) let
\begin{equation*}
    \phib(t) := \frac{\beta}{\sqrt{\pi}} \, \Exp{-\beta^2 t^2}
    .
\end{equation*}
It is such that \(\int \phib = 1\) and the Fourier transform is
\begin{equation}
    \label{eq:phi hat}
    \phibhat(\omega)
    :=
    \frac{1}{\sqrt{2\pi}} \int \odif{t} \, \phib(t) \, \Exp{-\I t \omega}
    =
    \frac{1}{\sqrt{2\pi}} \, \Exp{-\frac{\omega^2}{4\beta^2}}
\end{equation}
for which \(\int \phibhat = \beta\sqrt{\frac{2}{\pi}}\).
Then, for any Hamiltonian \(H\) and observable \(A\) we define the \emph{almost inverse Liouvillian}
\begin{equation}
    \label{eq:def-almost-inverse-Liouvillian}
    \calI_{H,\beta}(A)
    :=
    \int_{-\infty}^\infty \odif{t} \, \phib(t) \, \int_{0}^{t} \odif{s} \, \tau_s^H(A)
    ,
\end{equation}
where \(\tau_s^H(A) = \evol{\I H s}{A}\).
To make contact with the previous section, we imagine here a time-dependent interaction and the corresponding Hamiltonian in a finite volume \(\Lambda\).

To quantify the properties of the almost inverse Liouvillian, it will be helpful to define the exact inverse Liouvillian \(\calI_{H}\) along the same lines, see~\cite{BMNS2012}.
Instead of \(\phib\), the map \(\calI_{H}\) uses a non-negative function \(w\) with Fourier transform \(\hat{w}\in C^1_0((-\delta,\delta))\) and \(\int w = 1\).
Note that \(w\) can be chosen to decay faster than any inverse power but not exponentially.
Whenever \(\calI_{H}\) involves a gapped Hamiltonian with gap \(\gamma\), we implicitly choose \(\delta<\gamma\).

Using standard techniques, we see that the almost inverse Liouvillian has good locality properties.

\begin{lemma}[Locality of the almost inverse Liouvillian]
    \label{lem:locality-inv-Liouvillian-beta}
    Let \(D\in \N\), \(\Cvol > 0\) and \(\Lambda\in \graphs\) be finite.
    Let \(b'>b>0\), \(\Phi\) be an interaction such that \(\norm{\Phi}_{b'}<\infty\) and \(H\) the corresponding Hamiltonian.
    Then, for all disjoint \(X\), \(Y\subset \Lambda\), \(A\in \alg_X\) and \(B\in \alg_Y\),
    \begin{equation}
        \label{eq:lem-locality-inv-Liouvillian-beta-inf}
        \norm[\big]{
            \commutator[\big]{\calI_{H,\beta}(A), B}
        }
        \leq
        2 \, \min\{\abs{X},\abs{Y}\} \, \norm{A} \, \norm{B}
        \, \inf_{T > 0}
        \, \paren[\bigg]{
            \frac{2 \, \beta}{\sqrt{\pi} \, b^2 \, v^2}
            \, \e^{b(vT-\dist{X,Y})}
            + \frac{1}{\sqrt{\pi} \, \beta}
            \, \e^{-\beta^2 T^2}
        }
        ,
    \end{equation}
    where \(v\) is the Lieb-Robinson velocity from \cref{lem:LRB}.
\end{lemma}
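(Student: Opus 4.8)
The plan is to commute $B$ into the defining double integral of $\calI_{H,\beta}$, bound the inner commutator $\commutator{\tau_s^H(A),B}$ by interpolating between its trivial bound $2\norm{A}\norm{B}$ and the Lieb--Robinson bound of \cref{lem:LRB}, and then optimize over a separation scale. The idea is that on a Gaussian window $\abs{t}\leq T$ the observable $\tau_s^H(A)$ has spread only over a distance of order $vT$, so the Lieb--Robinson bound is efficient there, whereas on the tail $\abs{t}>T$ the weight $\phib(t)=\tfrac{\beta}{\sqrt{\pi}}\Exp{-\beta^2t^2}$ is already so small that the trivial bound wins; the free cutoff $T$ is exactly the parameter of the infimum in \eqref{eq:lem-locality-inv-Liouvillian-beta-inf}.

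Concretely, I would first write, using $\phib\geq 0$ together with linearity and boundedness of the commutator,
\begin{equation*}
    \norm[\big]{\commutator[\big]{\calI_{H,\beta}(A),B}}
    \leq
    \int_{-\infty}^{\infty}\odif{t}\,\phib(t)\,\abs*{\int_0^{t}\odif{s}\,\norm[\big]{\commutator[\big]{\tau_s^H(A),B}}}
    .
\end{equation*}
Since \cref{lem:LRB} controls $\norm{\commutator{\tau_{s,t}(A),B}}$ only through $\abs{t-s}$ (and the trivial bound does not involve $s$), the inner integrand is dominated by a function of $\abs{s}$ alone, so the backward orientation of $\int_0^{t}$ for $t<0$ costs nothing and the inner factor is at most $\int_0^{\abs{t}}\odif{s}\,\psi(s)$ with $\psi$ either bound. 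On $\abs{t}\leq T$ I take $\psi(s)=2\,\Ccod{1,b'-b}^{-1}\norm{A}\norm{B}\bigl(\Exp{bvs}-1\bigr)D(X,Y)$ and use $\int_0^{\abs{t}}\bigl(\Exp{bvs}-1\bigr)\odif{s}\leq\Exp{bv\abs{t}}/(bv)$, then bound the Gaussian by its peak $\phib(t)\leq\phib(0)=\beta/\sqrt{\pi}$ and integrate $\int_{-T}^{T}\Exp{bv\abs{t}}\odif{t}\leq 2\Exp{bvT}/(bv)$; together with $\Ccod{1,b'-b}\geq 1$ and $D(X,Y)\leq\min\{\abs{X},\abs{Y}\}\Exp{-b\dist{X,Y}}$ this produces the first term. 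On $\abs{t}>T$ I take $\psi(s)=2\norm{A}\norm{B}$, so the inner factor is at most $2\norm{A}\norm{B}\abs{t}$, and the elementary Gaussian moment $\int_{\abs{t}>T}\phib(t)\abs{t}\,\odif{t}=\tfrac{1}{\sqrt{\pi}\,\beta}\Exp{-\beta^2T^2}$ gives the second term (bounding $1\leq\min\{\abs{X},\abs{Y}\}$). Adding the two estimates and taking $\inf_{T>0}$ finishes the proof.

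This is a routine quasi-locality estimate, so I expect no genuine obstacle; the only care needed is in the constants. Two points are not automatic: first, estimating the Gaussian by $\phib(0)=\beta/\sqrt{\pi}$ instead of by $\int\phib=1$ is what supplies the extra factor $1/(bv)$, and hence the $1/v^{2}$ scaling of the first term; second, one uses $\Ccod{1,b'-b}\geq 1$ (from the volume estimate preceding \cref{lem:LRB}) to absorb the Lieb--Robinson prefactor $2\,\Ccod{1,b'-b}^{-1}$ into the leading constant. One should also avoid wasting a factor of $2$ on the sign of $t$ in the inner integral, which is why the $s\mapsto-s$ symmetry of the bound is used rather than simply enlarging $[0,t]$ to $[-\abs{t},\abs{t}]$.
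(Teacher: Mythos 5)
Your proposal is correct and follows essentially the same route as the paper's own proof: split the outer $t$-integral at a cutoff $T$, use the Lieb--Robinson bound from \cref{lem:LRB} together with $\phib(t)\leq\beta/\sqrt{\pi}$ and $\Ccod{1,b'-b}\geq 1$ on $\abs{t}<T$, and use the trivial commutator bound with the Gaussian first moment $\int_{\abs{t}>T}\phib(t)\abs{t}\odif{t}=\tfrac{1}{\sqrt{\pi}\beta}\Exp{-\beta^2T^2}$ on $\abs{t}\geq T$; your slightly cruder estimate $\int_0^{\abs{t}}(\Exp{bvs}-1)\odif{s}\leq\Exp{bv\abs{t}}/(bv)$ followed by $\int_{-T}^{T}\Exp{bv\abs{t}}\odif{t}\leq 2\Exp{bvT}/(bv)$ yields the same constant as the paper's exact double integral once the latter drops the negative correction terms.
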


One may choose \(T = \frac{d(X,Y)}{2v}\) and then use \(\dist{X,Y}^2 \geq \dist{X,Y}\) to get
\begin{equation}
    \label{eq:alternative locality}
    \norm[\big]{
        \commutator[\big]{\calI_{H,\beta}(A), B}
    }
    \leq 2 \, \min\{\abs{X},\abs{Y}\} \, \norm{A} \, \norm{B}
    \paren[\bigg]{
        \frac{2 \, \beta}{\sqrt{\pi} \, b^2 \, v^2}
        + \frac{1}{\sqrt{\pi} \, \beta}}\e^{- b(\beta)d(X,Y)}
    ,
\end{equation}
where \(b(\beta) = \min\List[\big]{\frac{b}{2}, \frac{\beta^2}{4v^2}}\).

\begin{remark}
    The last expression~\eqref{eq:alternative locality}, in particular \(b(\beta)\), emphasizes the double origin of the quasi-locality of the map \(\calI_{H,\beta}\), namely the locality of the Hamiltonian through the Lieb-Robinson bound yielding the dependence on \(b\) and the locality expressed by the Gaussian filter yielding the dependence on \(\beta\).
    In particular, the locality cannot be improved further than \(\beta\) of the order of \(\sqrt{b}\), and we shall later restrict our attention to this situation, see~\eqref{eq:Hyp on beta} below.
\end{remark}

\begin{proof}
    We bound
    \begin{equation*}
        \norm[\big]{
            \commutator[\big]{\calI_{H,\beta}(A), B}
        }
        \leq
        \int_{-\infty}^\infty \odif{t} \, \phib(t) \, \int_{0}^{\mathrlap{t}} \odif{s}
        \, \norm[\big]{
            \commutator[\big]{\tau_s^H(A), B}
        }
        =
        \inf_{T>0}(I_{<T} + I_{\geq T})
        ,
    \end{equation*}
    and split the integral into a part \(I_{<T}\) where \(\abs{t}<T\) and the rest \(I_{\geq T}\) where \(\abs{t}\geq T\).
    For the first part we use \(\phi_\beta(t) \leq \beta/\sqrt{\pi}\) and the Lieb-Robinson bound for \(\tau^H\), \cref{lem:LRB}, where we bound \(\Ccod{1,b'-b}^{-1} \leq 1\), to obtain
    \begin{align*}
        I_{<T}
        &\leq
        \frac{\beta}{\sqrt{\pi}}
        \, 4 \, \min\{\abs{X},\abs{Y}\} \, \norm{A} \, \norm{B} \, \e^{-b\dist{X,Y}}
        \int_{0}^{\mathrlap{T}} \odif{t} \, \int_{0}^{\mathrlap{t}} \odif{s} \, \paren[\big]{\e^{bvs}-1}
        \\&=
        \frac{\beta}{\sqrt{\pi}}
        \, 4 \, \min\{\abs{X},\abs{Y}\} \, \norm{A} \, \norm{B}
        \, \frac{1}{(bv)^2}\paren[\big]{\e^{bvT}-1-bvT-\tfrac{1}{2}(bvT)^2}
        \, \e^{-b\dist{X,Y}}
        .
    \end{align*}
    For the second term we use the decay of \(\phi_\beta\) and the trivial bound
    \(
        \norm[\big]{
            \commutator[\big]{\tau_s^H(A), B}
        }
        \leq
        2 \, \norm{A} \, \norm{B}
    \)
    to get
    \begin{equation*}
        I_{\geq T}
        \leq
        4 \, \norm{A} \, \norm{B}
        \, \frac{\beta}{\sqrt{\pi}}
        \int_{T}^{\mathrlap{\infty}} \odif{t} \, \Exp{-\beta^2 t^2} \, t
        \leq
        2 \, \norm{A} \, \norm{B}
        \, \frac{1}{\sqrt{\pi} \, \beta}
        \, \paren[\big]{\Exp{-\beta^2 T^2} -1}
        .
    \end{equation*}
    Combining both bounds concludes the proof.
\end{proof}

\begin{assumption}[Gap Assumption]
    \label{assumption:gap}
    Let \(\Lambda\) be finite.
    Let \(\Phi\) be an interaction satisfying \(\norm{\Phi}_{b'}<\infty\) for some \(b'>0\), and let \(H\) be the corresponding Hamiltonian.
    We assume that the spectrum of \(H\) is of the form
    \begin{equation*}
        \sigma\paren{H} = \sigma_0 \cup \sigma_1
    \end{equation*}
    with \(\gamma = \dist{\sigma_1,\sigma_0}>0\).
\end{assumption}

From here onwards, we shall drop the subscript \(\Lambda\) for notational clarity.
We denote \(P = \chi_{\sigma_0}(H)\) the spectral projection of \(H\) corresponding to the patch \(\sigma_0\).
\begin{proposition}
    \label{prop:almost-inverse-Liouvillian-vanishes-on-PAP}
    Let \(H\) satisfy the \nameref{assumption:gap} and assume that \(\sigma_0 = \List{E_0}\) is a single eigenvalue.
    Then
    \begin{equation*}
        \calI_{H,\beta}( P \, A \, P )
        =
        0
        ,
    \end{equation*}
    for all \(A\in \alg_\Lambda\).
\end{proposition}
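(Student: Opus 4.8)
The plan is to exploit that $P\,A\,P$ is a fixed point of the Heisenberg evolution $\tau^H_s$ — this is where the hypothesis that $\sigma_0 = \List{E_0}$ is a \emph{single} eigenvalue enters — so that the inner $s$-integral in~\eqref{eq:def-almost-inverse-Liouvillian} becomes linear in $t$, and the outer integral against the even Gaussian $\phib$ then vanishes by parity.

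Concretely, I would proceed as follows. First, since $P = \chi_{\sigma_0}(H)$ is the spectral projection of $H$ onto the single eigenvalue $E_0$, we have $H\,P = P\,H = E_0\,P$, whence $\e^{\pm\I H s}\,P = \e^{\pm\I E_0 s}\,P$ for all $s\in\R$. Substituting this into $\tau^H_s(P\,A\,P) = \e^{\I H s}\,P\,A\,P\,\e^{-\I H s}$ and cancelling the scalar phases gives $\tau^H_s(P\,A\,P) = P\,A\,P$, independently of $s$. Hence $\int_0^t \odif{s}\,\tau^H_s(P\,A\,P) = t\,P\,A\,P$.

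Second, I would pull the fixed bounded operator $P\,A\,P$ out of the remaining integral — legitimate because $t\mapsto t\,\phib(t)$ lies in $L^1(\R)$ by Gaussian decay — to obtain
\begin{equation*}
    \calI_{H,\beta}(P\,A\,P)
    =
    \paren[\bigg]{\int_{-\infty}^{\infty} \odif{t}\, t\,\phib(t)}\,P\,A\,P
    .
\end{equation*}
Since $\phib$ is even, $t\,\phib(t)$ is odd, so the integral is $0$ and we are done.

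There is essentially no obstacle here: the argument is a two-line computation once one observes the invariance of $P\,A\,P$. The only point deserving a word of care is the interchange in the second step, which is trivial since we integrate a fixed bounded operator against an $L^1$ scalar weight. It is perhaps worth remarking in the write-up that this is exactly the mechanism the \enquote{inverse Liouvillian} is built to exploit: the primitive $\int_0^t\odif{s}$ turns the spectrally diagonal part into a term linear in $t$, which is then annihilated by the vanishing first moment of the even filter. An equivalent route is to insert the spectral decomposition $\tau^H_s(A) = \sum_{E,E'}\e^{\I(E-E')s}\,P_E\,A\,P_{E'}$ and note that $P\,A\,P$ only retains the $E = E' = E_0$ block; the full \nameref{assumption:gap} is not needed beyond ensuring that $E_0$ is isolated, so that $P$ is well defined.
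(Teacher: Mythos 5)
Your argument is exactly the paper's: observe that $\tau^H_s(P\,A\,P) = P\,A\,P$ because $\sigma_0$ is a single eigenvalue, so the inner integral produces $t\,P\,A\,P$, and the outer integral against the odd function $t\,\phib(t)$ vanishes. The proposal is correct and matches the paper's proof.
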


\begin{proof}
    One calculates
    \begin{equation*}
        \calI_{H,\beta}( P \, A \, P)
        =
        \int_{-\infty}^\infty \odif{t} \, \phib(t) \, \int_{0}^{\mathrlap{t}} \odif{s} \, \tau_s^H(P \, A \, P)
        =
        \int_{-\infty}^\infty \odif{t} \, \phib(t) \, t \, P \, A \, P
        =
        0
        ,
    \end{equation*}
    because the time-evolution is trivial and \(\phib\) is an even function.
\end{proof}

\begin{remark}
    The fact that \(\calI_{H,\beta}\) vanishes exactly on the range of \(P\) depends on the assumption \(\sigma_0 = \List{E_0}\).
    For spectral patches with \(\delta = \diam{\sigma_0}>0\) sufficiently small with respect to \(\gamma\), a similar result could be obtained, although with an error bound, by replacing \(\phibhat\) with two Gaussians centred at \(\pm \tfrac{\delta + \gamma}{2}\).
\end{remark}

\begin{proposition}
    \label{prop:almost-inverse-Liouvillian-bound}
    Let \(H\) satisfy the \nameref{assumption:gap}.
    Then \(\calI_{H,\beta}\) is an almost inverse of the Liouvillian \(\calL_H = -\I \, \commutator{H, \cdot}\) on off-diagonal operators.
    More precisely, for all \(A\in \alg\) such that \(A = P \, A \, P^\perp\) or \(A = P^\perp \, A \, P\), and for all \(q\in\intervalcc{1,\infty}\),
    \begin{align*}
        \norm[\big]{
            \calI_{H,\beta} \circ \calL_{H} (A)
            - A
        }_q
        &\leq
        \, \norm{P_\mu}_1
        \, \norm{A}
        \, \Exp{-\frac{\gamma^2}{4\beta^2}}
        .
    \end{align*}
    Moreover, for all \(A\in \alg\) and \(q\in \intervalcc{1,\infty}\), it holds that
    \begin{equation*}
        \norm[\big]{
            \commutator[\big]{
                \calI_{H,\beta} \circ \calL_{H} (A)
                - A
                ,
                P
            }
        }_q
        \leq
        2
        \, \norm{P_\mu}_1
        \, \norm{A}
        \, \Exp{-\frac{\gamma^2}{4\beta^2}}
        .
    \end{equation*}
\end{proposition}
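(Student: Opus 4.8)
The plan is in two movements: identify the error operator exactly, then estimate it. Write $\tau^H_{\phib}(A) = \int_{-\infty}^\infty\odif{t}\,\phib(t)\,\tau_t^H(A)$ for the Gaussian smearing of $A$ along the dynamics generated by $H$. I would first show that for \emph{every} $A\in\alg$ one has
\[
    \calI_{H,\beta}\circ\calL_H(A) - A = -\,\tau^H_{\phib}(A) .
\]
Indeed $\tau_s^H(\calL_H(A)) = -\I\,\commutator{H,\tau_s^H(A)} = -\odv{}{s}\tau_s^H(A)$, so $\int_0^t\odif{s}\,\tau_s^H(\calL_H(A)) = A - \tau_t^H(A)$; multiplying by $\phib(t)$, integrating, and using $\int\phib = 1$ gives the identity. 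That it holds for arbitrary $A$ is exactly what the second inequality of the proposition will use.

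Next I would diagonalise $H = \sum_a E_a\,\Pi_a$ over its eigenprojections. Since $\tau_t^H(\Pi_a A\Pi_b) = \e^{\I(E_a-E_b)t}\,\Pi_a A\Pi_b$, formula~\eqref{eq:phi hat} yields
\[
    \tau^H_{\phib}(A) = \sum_{a,b}\Exp{-\frac{(E_a-E_b)^2}{4\beta^2}}\,\Pi_a A\Pi_b .
\]
When $A$ is off-diagonal, every pair $(a,b)$ that occurs has $E_a,E_b$ in distinct spectral patches, hence $\abs{E_a-E_b}\ge\gamma$; since the blocks $\Pi_a A\Pi_b$ are mutually orthogonal for the Hilbert--Schmidt inner product, Pythagoras gives at once $\norm{\tau^H_{\phib}(A)}_2 \le \Exp{-\gamma^2/(4\beta^2)}\,\norm{A}_2$.

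To reach the stated bound in the Schatten norm $\norm{\cdot}_q$ for all $q\in\intervalcc{1,\infty}$, I would trade this $L^2$ gain against the finite rank of the spectral projection. Since $P$ commutes with every $\Pi_a$, the map $\tau^H_{\phib}$ sends $P\alg P^\perp$ into itself, so $PAP^\perp$ and $\tau^H_{\phib}(PAP^\perp)$ both have rank at most $\norm{P_\mu}_1$. Combining the elementary inequalities $\norm{X}_q\le\norm{X}_2$ for $q\ge2$, $\norm{X}_q\le(\operatorname{rank}X)^{1/q-1/2}\norm{X}_2$ for $q\le2$, and $\norm{X}_2\le(\operatorname{rank}X)^{1/2}\norm{X}_\infty$ with $\norm{PAP^\perp}\le\norm{A}$ and the $L^2$ estimate, one gets
\[
    \norm[\big]{\tau^H_{\phib}(PAP^\perp)}_q \le \norm{P_\mu}_1\,\norm{A}\,\Exp{-\frac{\gamma^2}{4\beta^2}}
\]
for every $q$ (using $\norm{P_\mu}_1\ge1$ to absorb the fractional powers; the cases $P\in\{0,\unit\}$ are trivial). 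The first inequality of the proposition is the case $A = PAP^\perp$, or $A = P^\perp A P$; for the second, $P$ commuting with every $\Pi_a$ gives $\commutator{\tau^H_{\phib}(A),P} = \tau^H_{\phib}(PAP^\perp) - \tau^H_{\phib}(P^\perp AP)$, and the triangle inequality produces the factor $2$.

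The one genuinely delicate point is this last step. The spectral gain cannot simply be read off in the operator norm from the Fourier representation above: the Fourier transform of a Gaussian is again a Gaussian, so the naive Schur-multiplier/Lieb--Robinson-type estimate carries the compensating weight $\Exp{+\gamma^2/(4\beta^2)}$ and exactly undoes the gain, and the multiplier $\Exp{-(E_a-E_b)^2/(4\beta^2)}$ is moreover not a contraction of $\norm{\cdot}_\infty$ with constant $\Exp{-\gamma^2/(4\beta^2)}$. Isolating the gain where it is free --- the Hilbert--Schmidt norm, by orthogonality of the blocks --- and then paying for it with the rank of $P$ is what makes the estimate hold for all $q\in\intervalcc{1,\infty}$.
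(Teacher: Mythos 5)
Your proof is correct, and the error identity you isolate,
\begin{equation*}
    \calI_{H,\beta}\circ\calL_H(A) - A = -\tau^H_{\phib}(A),
    \qquad
    \tau^H_{\phib}(A) = \int_{-\infty}^\infty\odif{t}\,\phib(t)\,\tau_t^H(A),
\end{equation*}
is exactly the quantity the paper estimates. (The paper phrases it as \(\paren[\big]{\calI_{H,\beta}-\calI_{H}}\circ\calL_H(A)\) and uses that \(\calI_H\circ\calL_H(A)=A\) for off-diagonal \(A\) together with the fact that \(\hat w\) vanishes on the gap; the net operator is the same.) The reduction of the second inequality to the first via \(\commutator{\tau^H_{\phib}(A),P}=\tau^H_{\phib}(\commutator{A,P})\) and \(\commutator{A,P}=P^\perp AP - PAP^\perp\) is also what the paper does.

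Where you genuinely diverge is the estimation step, and your closing diagnosis of the ``delicate point'' does not match what the paper actually does. You extract the Gaussian gain in the Hilbert--Schmidt norm via orthogonality of the blocks \(\Pi_aA\Pi_b\), then pay for it with \(\operatorname{rank}(PAP^\perp)\le\norm{P}_1\) through the inequalities \(\norm{X}_2\le(\operatorname{rank}X)^{1/2}\norm{X}_\infty\) and \(\norm{X}_q\le(\operatorname{rank}X)^{1/q-1/2}\norm{X}_2\) for \(q\le2\), absorbing the resulting fractional powers \(\norm{P}_1^{1/q}\) or \(\norm{P}_1^{1/2}\) using \(\norm{P}_1\ge1\). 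This works but is wasteful. The paper instead groups the spectral sum by the \(\sigma_0\)-index and applies Hölder directly: writing
\begin{equation*}
    -\tau^H_{\phib}(A)
    =
    \sqrt{2\pi}\sum_{\mu\in\sigma_0} P_\mu\, A \, C_\mu,
    \qquad
    C_\mu := \sum_{\nu\in\sigma_1}\phibhat(\nu-\mu)\,P_\nu,
\end{equation*}
it uses \(\norm{P_\mu A C_\mu}_q\le\norm{P_\mu}_q\,\norm{A}\,\norm{C_\mu}\), and the key observation is that \(\norm{C_\mu}=\max_{\nu\in\sigma_1}\phibhat(\nu-\mu)\le\phibhat(\gamma)\) because the \(P_\nu\) are mutually orthogonal projections and \(\abs{\nu-\mu}\ge\gamma\). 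Summing over \(\mu\) and using \(\norm{P_\mu}_q\le\norm{P_\mu}_1\) (projections have \(0/1\) singular values) gives \(\sum_\mu\norm{P_\mu}_q\le\norm{P}_1\). So there is no Schur-multiplier obstruction and no need for interpolation or rank counting: once \(A\) is off-diagonal, the \(\nu\)-sum is a small operator in operator norm for each fixed \(\mu\in\sigma_0\), and Hölder gives the Schatten-\(q\) bound in one line. Your route is a valid alternative, but the direct Hölder argument is both shorter and sharper (it yields \(\norm{P}_q\) before the final relaxation to \(\norm{P}_1\)).

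Two small notational remarks: the \(\norm{P_\mu}_1\) appearing in the statement should be read as \(\norm{P}_1\) (the index \(\mu\) is a leftover from the proof), as your own use of \(\operatorname{rank}P\) implicitly recognizes; and your Schur-multiplier/Lieb--Robinson aside, while a sensible worry in the abstract, does not apply to the paper's estimate, which never attempts a two-sided multiplier bound.
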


\begin{proof}
    For the first statement, since the calculation is exactly analogous, we only consider the case \(A = P \, A \, P^\perp\).
    From the definition of the almost spectral flow, the fact that \(\tau^H_s (\calL_{H} (A)) = - \odv{}{s} \, \tau^H_s(A)\) and the spectral theorem for \(H\), we obtain
    \begin{align*}
        \paren[\big]{\calI_{H,\beta}- \calI_{H}}
        \circ \calL_{H} (A)
        &=
        \int_{-\infty}^\infty \odif{t} \, \paren[\big]{w(t) - \phib(t)}
        \, \paren[\big]{\tau^H_t(A) - A}
        \\&=
        \sumstack[l]{\mu \in \sigma_0}
        \, \sumstack[r]{\nu \in \sigma_1}
        \int_{-\infty}^\infty \odif{t} \, \paren[\big]{w(t) - \phib(t)}
        \, \Exp{\I t (\mu-\nu) }
        \, P_\mu \, A \, P_\nu
        \\&=
        \sqrt{2\pi}
        \, \sumstack[l]{\mu \in \sigma_0}
        \, P_\mu \, A
        \, \sumstack[r]{\nu \in \sigma_1}
        \phibhat(\nu-\mu)
        \, P_\nu
        ,
    \end{align*}
    where we used \(\int w = 1 = \int \phib\) and \(\hat{w}(\nu-\mu) = 0\) because \(\abs{\nu-\mu} \geq \gamma\) by the \nameref{assumption:gap}.
    By the triangle and Hölder inequalities, in particular using that \(\norm{P_\mu}_q \leq\norm{P_\mu}_1\) for all \(q\) and \(\mu\) and \(\sum_{\mu\in \sigma_0} \norm{P_\mu}_1 = \norm{P}_1\) we obtain,
    \begin{equation*}
        \begin{aligned}
            \norm[\big]{
                \calI_{H,\beta} \circ \calL_{H} (A)
                - A
            }_q
            &\leq
            \sqrt{2\pi}
            \, \sumstack[l]{\mu \in \sigma_0}
            \, \norm{P_\mu}_q
            \, \norm{A}
            \, \norm[\Big]{
                \sumstack[r]{\nu \in \sigma_1}
                \phibhat(\nu-\mu)
                \, P_\nu
            }
            \\&\leq
            \sqrt{2\pi}
            \, \norm{P}_q
            \, \norm{A}
            \, \phibhat(\gamma)
            .
        \end{aligned}
    \end{equation*}
    With~\eqref{eq:phi hat}, this yields the claim.

    For the second statement, note that
    \begin{equation*}
        \commutator[\big]{
            \calI_{H,\beta} \circ \calL_{H} (A)
            - A
            ,
            P
        }
        =
        \calI_{H,\beta} \circ \calL_{H}
        (\commutator{A,P})
        - \commutator{A,P}
    \end{equation*}
    because \(H\) and \(P\) commute.
    The statement follows by applying the first part since \(\commutator{A,P} = P^\perp \, A \, P - P \, A \, P^\perp\).
\end{proof}

\begin{remark}
    Since we are dealing with a fixed finite volume here, the rank of \(P\) is finite.
    For the applications we have in mind however, it is crucial to have that \(P_{\Lambda}\) remains uniformly bounded as \(\Lambda\to\Gamma\).
\end{remark}

It will later also be helpful to compare the almost inverse Liouvillian to the exact inverse Liouvillian directly.

\begin{lemma}
    \label{lem:bound-almost-inverse-Liouvillian-and-inverse-Liouvillian}
    Let \(\Lambda\) finite and \(H\in \alg_\Lambda\) satisfy the \nameref{assumption:gap}.
    For all \(A\in \alg\) such that \(A = P \, A \, P^\perp\) or \(A = P^\perp \, A \, P\), and for all \(q\in\intervalcc{1,\infty}\),
    \begin{equation*}
        \norm[\big]{
            \calI_{H,\beta}(A)
            - \calI_{H}(A)
        }_q
        \leq
        \, \norm{P}_1
        \, \norm{A}
        \, \gamma^{-1}
        \, \Exp{-\frac{\gamma^2}{4\beta^2}}
        .
    \end{equation*}
    Moreover, for all \(A\in \alg\) and \(q\in\intervalcc{1,\infty}\),
    \begin{equation*}
        \norm[\big]{
            \commutator[\big]{
                \calI_{H,\beta} (A)
                -
                \calI_H(A)
                ,
                P
            }
        }_q
        \leq
        2
        \, \norm{P}_1
        \, \norm{A}
        \, \gamma^{-1}
        \, \Exp{-\frac{\gamma^2}{4\beta^2}}
        .
    \end{equation*}
\end{lemma}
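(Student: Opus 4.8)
The plan is to turn both $\calI_{H,\beta}$ and the exact inverse Liouvillian $\calI_H$ into explicit spectral multipliers on off-diagonal operators, mirroring the computation in the proof of \cref{prop:almost-inverse-Liouvillian-bound}. Since $\Lambda$ is finite, $H$ has a finite spectral decomposition; write $P = \sumstack{\mu\in\sigma_0}P_\mu$ and $P^\perp = \sumstack{\nu\in\sigma_1}P_\nu$ for its eigenprojections. For $A = P\,A\,P^\perp$ (the case $A = P^\perp A P$ is identical) the spectral theorem gives $\tau_s^H(A) = \sumstack{\mu\in\sigma_0}\sumstack{\nu\in\sigma_1}\Exp{\I s(\mu-\nu)}\,P_\mu\,A\,P_\nu$. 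I would carry out the inner integral using $\int_0^t\odif{s}\,\Exp{\I s\omega} = (\Exp{\I t\omega}-1)/(\I\omega)$, which is legitimate because $\mu\neq\nu$ whenever $P_\mu A P_\nu\neq 0$, and then perform the outer integral against a filter $g$ with $\int g = 1$; the $t$-integral collapses through the Fourier transform, and the piece attached to $(\mu,\nu)$ is left with the scalar multiplier $(\sqrt{2\pi}\,\hat g(\nu-\mu)-1)/(\I(\mu-\nu))$. Specializing, $\calI_{H,\beta}$ carries the multiplier $(\sqrt{2\pi}\,\phibhat(\nu-\mu)-1)/(\I(\mu-\nu))$, while $\calI_H$ carries $-1/(\I(\mu-\nu))$ since $\hat w(\nu-\mu) = 0$ — indeed $\abs{\nu-\mu}\geq\gamma>\delta$, the first inequality by the \nameref{assumption:gap} and the second by the choice of $\delta$ in the definition of $\calI_H$.

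Subtracting, the two $-1$ contributions cancel and
\[
    \calI_{H,\beta}(A) - \calI_H(A)
    =
    \sqrt{2\pi}\,\sumstack[l]{\mu\in\sigma_0}\,P_\mu\,A\,\sumstack[r]{\nu\in\sigma_1}\frac{\phibhat(\nu-\mu)}{\I(\mu-\nu)}\,P_\nu
    .
\]
The last factor is a sum of scalars times mutually orthogonal projections, so its operator norm equals $\max_{\nu\in\sigma_1}\phibhat(\nu-\mu)/\abs{\nu-\mu}$; using $\phibhat(\omega) = (2\pi)^{-1/2}\Exp{-\omega^2/(4\beta^2)}$, the monotonicity of $\omega\mapsto\Exp{-\omega^2/(4\beta^2)}/\omega$ on $(0,\infty)$, and $\abs{\nu-\mu}\geq\gamma$, this is at most $(2\pi)^{-1/2}\gamma^{-1}\Exp{-\gamma^2/(4\beta^2)}$. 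Then, exactly as in \cref{prop:almost-inverse-Liouvillian-bound}, the triangle inequality together with $\norm{P_\mu}_q\leq\norm{P_\mu}_1$ and $\sum_{\mu\in\sigma_0}\norm{P_\mu}_1 = \norm{P}_1$ gives the first claimed bound, the $\sqrt{2\pi}$ cancelling the $(2\pi)^{-1/2}$.

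For the second statement I would use that $P$ commutes with $H$, hence with $\Exp{\pm\I H s}$, so $\commutator{\tau_s^H(A),P} = \tau_s^H(\commutator{A,P})$; integrating this identity shows that both $\calI_{H,\beta}$ and $\calI_H$ commute with $\commutator{\cdot\,,P}$, whence $\commutator{\calI_{H,\beta}(A)-\calI_H(A),P} = (\calI_{H,\beta}-\calI_H)(\commutator{A,P})$. Since $\commutator{A,P} = P^\perp A P - P\,A\,P^\perp$ is a difference of two off-diagonal operators, each of norm at most $\norm{A}$, applying the first bound to each and adding produces the factor $2$. I do not anticipate a genuine obstacle: everything reduces to the finite-dimensional spectral calculus already used for \cref{prop:almost-inverse-Liouvillian-bound}. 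The only points deserving a line of justification are the absolute convergence of the defining double integrals (guaranteed by $\int\abs{\phib(t)}\,\abs{t}\,\odif{t}<\infty$ and the corresponding superpolynomial bound for $w$), which also legitimizes exchanging the finite spectral sum with the integrals, and the elementary monotonicity estimate for $\phibhat(\omega)/\omega$.
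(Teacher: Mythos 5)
Your proof is correct and follows the same route as the paper: spectral decomposition of the off-diagonal observable, collapse of the double integral to the multiplier $\sqrt{2\pi}\,\phibhat(\nu-\mu)/(\I(\mu-\nu))$ on each $P_\mu A P_\nu$ block (using $\hat w(\nu-\mu)=0$ from $\abs{\nu-\mu}\geq\gamma>\delta$), then Hölder plus the monotonicity of $\omega\mapsto\phibhat(\omega)/\omega$ to get the $\gamma^{-1}\phibhat(\gamma)$ bound, and finally the identity $\commutator{\calI_{H,\beta}(A)-\calI_H(A),P}=(\calI_{H,\beta}-\calI_H)(\commutator{A,P})$ together with $\commutator{A,P}=P^\perp A P-P\,A\,P^\perp$ for the second estimate. (In passing you also silently correct a typo in the paper, which prints $\norm{P_\mu}_1$ where $\norm{P}_1$ is meant in the final display of its own proof.)
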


\begin{proof}
    As in the proof of \cref{prop:almost-inverse-Liouvillian-bound}, we obtain for \(A = P \, A \, P^\perp\)
    \begin{align*}
        \calI_{H,\beta}(A) - \calI_H(A)
        &=
        \sumstack[l]{\mu \in \sigma_0}
        \, \sumstack[r]{\nu \in \sigma_1}
        \int_{-\infty}^\infty \odif{t} \, \paren[\big]{\phib(t)-w(t)}
        \, \int_{0}^{t} \odif{s} \, \Exp{\I (\mu-\nu) s}
        \, P_\mu \, A \, P_\nu
        \\&=
        \sumstack[l]{\mu \in \sigma_0}
        \, P_\mu \, A
        \, \sumstack[r]{\nu \in \sigma_1}
        \, \frac{\sqrt{2\pi}}{\I (\mu-\nu)}
        \, \phibhat(\nu-\mu)
        \, P_\nu
        ,
    \end{align*}
    by the \nameref{assumption:gap}.
    It follows that
    \begin{equation*}
        \norm[\big]{
            \calI_{H,\beta}(A) - \calI_H(A)
        }_q
        \leq
        \sqrt{2\pi}
        \, \sumstack[l]{\mu \in \sigma_0}
        \, \norm{P_\mu}_q
        \, \norm{A}
        \, \norm[\Big]{
            \sumstack[r]{\nu \in \sigma_1}
            \frac{\phibhat(\nu-\mu)}{\nu-\mu}
            \, P_\nu
        }
        \leq
        \sqrt{2\pi}
        \, \norm{P_\mu}_1
        \, \norm{A}
        \, \frac{\phibhat(\gamma)}{\gamma}.
    \end{equation*}
    The second statement follows as in \cref{prop:almost-inverse-Liouvillian-bound}.
\end{proof}


\section{The almost spectral flow and LPPL}
\label{sec:almost-spectral-flow}

Under the \nameref{assumption:gap-time-dependent} below, \(\calI_{H(t)}\paren[\big]{\dot H(t)}\) generates an automorphism \(\alpha_{0,s}\) which provides a mapping between the instantaneous ground states of \(H(0)\) and \(H(s)\), namely
\begin{equation}
    \label{eq:automorphic equivalence}
    \omega_s(A)
    = \omega_0 \circ \alpha_{0,s}(A)
\end{equation}
for all \(A\in \alg_\Lambda\), where \(\omega_s(A) = \Tr(P_s)^{-1}\Tr(P_s \, A)\), see~\cite{BMNS2012}.
However, as explained in the introduction, this automorphism \(\alpha\) cannot be exponentially local.
Instead, we shall use the almost inverse Liouvillian through \(\calI_{H(t),\beta}\paren[\big]{\dot H(t)}\) to obtain an exponentially local almost spectral flow \(\alpha^\beta\).

Similarly to the locality of \(\calI_{H(t),\beta}(A)\) for strictly local operators \(A\in \alg_X\) discussed in \cref{lem:locality-inv-Liouvillian-beta}, standard arguments show that \(\calI_{H(t),\beta}\paren[\big]{\dot H(t)}\) is given by an exponentially local interaction if \(H\) and \(\dot H\) are, see~\cref{lem:interaction-for-generator-of-almost-spectral-flow}.
We first concentrate to the spectral mapping properties of~\(\alpha^\beta_{0,s}\).

\begin{assumption}[Uniform Gap assumption]
    \label{assumption:gap-time-dependent}
    Let \(H(s)\) with \(s \in \intervalcc{0,1}\) be a Hamiltonian given by a smooth time-dependent interaction \(s \mapsto \Phi(s)\) satisfying \(\sup_{s\in\intervalcc{0,1}}\norm{\Phi(s)}_{b'}<\infty\) for some \(b'>0\).
    We assume that \(H(s)\) satisfies the \nameref{assumption:gap}.
    Moreover, we assume that there exists compact intervals \(I(s)\) with endpoints depending smoothly on \(s\) such that \(\sigma_0(s) \subset I(s) \subset \R \setminus \sigma_1(s)\).
\end{assumption}

To characterize the almost spectral flow \(\alpha^\beta\), we wish to compare it to the exact spectral flow \(\alpha\), using the standard identity
\begin{equation}
    \label{eq:comparing-automorphisms}
    \alpha^1_{s,t}(A)
    - \alpha^2_{s,t}(A)
    =
    \sum_{Z\subset \Lambda}
    \int_s^t
    \odif{\lambda}
    \, \alpha^1_{s,\lambda}
    \paren[\Big]{
        \commutator[\Big]{
            \Psi_1(\lambda,Z) - \Psi_2(\lambda,Z)
            ,
            \alpha^2_{\lambda,t} (A)
        }
    }
    ,
\end{equation}
for any two automorphisms generated by interactions \(\Psi_{1}\) and \(\Psi_{2}\).
One then uses the Lieb-Robinson bound to obtain an estimate which grows in \(\supp(A)\) but is not extensive in~\(\Lambda\).
As outlined above, the locality of the interaction \(\calI_{H(t)}\paren[\big]{\dot H(t)}\) is insufficient for our purpose as \(\calI_{H(t)}\) is not exponentially local.
An exponential bound can be obtained directly using \cref{lem:bound-almost-inverse-Liouvillian-and-inverse-Liouvillian} with \(A\) being replaced by the full \(\dot H(t)\), but the resulting bound on \(\norm{\alpha_{0,s}(A)- \alpha^\beta_{0,s}(A)}\)
is proportional to \(\sup_{s\in\intervalcc{0,1}} \norm{\dot H(s)}\), which in general is extensive in \(\Lambda\).

The solution we propose below is to construct another flow \(\alpha^{\beta,X}_{t,s}\) where \(X = \supp(A)\), which is adapted to the support of the observable, and use it as an intermediate to compare the exact spectral flow \(\alpha\) with its exponentially local but approximate cousin \(\alpha^\beta\).
The generator of \(\alpha^{\beta,X}\) is given by
\begin{equation}
    \label{eq:generater-local-approximation-almost-spectral-flow}
    \sum_{Z\subset \Lambda}
    \calI_{H(t),\beta_{X,Z}}
    \paren[\big]{
        \dot \Phi(t,Z)
    }
    ,
\end{equation}
where the width of the Gaussian is modulated as follows
\begin{equation}
    \label{eq:definition-beta-X-Z}
    \frac{1}{\beta_{X,Z}^2}
    =
    \frac{1}{\beta^2}
    + \unit_{\dist{X,Z} \geq \ell} \, \dist{X,Z}
    ,
\end{equation}
and the parameter \(\ell\) will be chosen appropriately later.
We note that a very similar choice was also proposed in the original~\cite{HW2005}.
With this, we shall prove the following.

\begin{theorem}
    \label{thm:almost-spectral-flow}
    Let \(D\in \N\), \(\Cvol > 0\), \(b'>b>0\), \(C\supint>0\), \(\gamma>0\).
    Then there exist constants \(C\) and \(c > 0\) such that the following holds.
    For all \(\Lambda\in \graphs\) finite, smooth Hamiltonians \(H\) that satisfy the \nameref{assumption:gap-time-dependent} with gap \(\gamma\) and are given by interactions \(\Phi\) such that \(\norm{\Phi}_{b'}<C\supint\) and \(\norm{\dot \Phi}_{b'}<C\supint\), the flow \(\alpha^\beta_{s,0}\) generated by \(\calI_{H(s),\beta}\paren[\big]{\dot H(s)}\) is an almost spectral flow in the sense that
    \begin{equation}
        \abs[\big]{
            \omega_s(A)
            - \omega_0 \circ \alpha^\beta_{s,0}(A)
        }
        \leq
        C
        \, \abs{X}^2
        \, \norm{A}
        \, \e^{-c \beta^{-2}}
        ,
    \end{equation}
    for all \(X\subset \Lambda\) and \(A\in \alg_X\) and
    \begin{equation}
        \label{eq:Hyp on beta}
        \beta\in\intervaloo[\Big]{0,\min\List[\big]{1,\sqrt{2b}\,v}}
        ,
    \end{equation}
    where \(v\) is the Lieb-Robinson velocity from \cref{lem:LRB}.
\end{theorem}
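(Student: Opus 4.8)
The plan is to bridge the Gaussian-filtered flow $\alpha^\beta$ and the exact spectral flow $\alpha$ through the support-adapted flow $\alpha^{\beta,X}$ of \eqref{eq:generater-local-approximation-almost-spectral-flow}--\eqref{eq:definition-beta-X-Z}: writing $\omega_s = \omega_0\circ\alpha_{s,0}$ (cf.\ \eqref{eq:automorphic equivalence}), I would bound $\abs{\omega_0(\alpha_{s,0}(A) - \alpha^\beta_{s,0}(A))}$ by $\abs{\omega_0(\alpha_{s,0}(A) - \alpha^{\beta,X}_{s,0}(A))} + \abs{\omega_0(\alpha^{\beta,X}_{s,0}(A) - \alpha^\beta_{s,0}(A))}$. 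The recurring structural observation is that $\omega_0 = \Tr(P_0)^{-1}\Tr(P_0\,\cdot\,)$ is the tracial state on $\operatorname{ran}P_0$ and every flow is implemented by a unitary propagator; hence the factors $\norm{P}_1$ appearing in \cref{prop:almost-inverse-Liouvillian-bound,lem:bound-almost-inverse-Liouvillian-and-inverse-Liouvillian}---all equal to $\Tr(P_0)$, the rank of a spectral projection being constant along a gapped path---cancel the normalisation $\Tr(P_0)^{-1}$, so the final constant depends only on the parameters listed and not on $\operatorname{rank}P_0$ or $\Lambda$.

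For the first term I would argue with the spectral projections rather than in operator norm, because the $P_\lambda$-block-diagonal part of the generators $\calI_{H,\beta'}(\dot\Phi(\lambda,Z))$ commutes with $P_\lambda$ but does \emph{not} become small as $\beta'\to0$. Let $P_\lambda$, $Q_\lambda$ be the images of $P_0$ under the propagators of $\alpha$, $\alpha^{\beta,X}$; then $\omega_s(A)=\Tr(P_0)^{-1}\Tr(P_sA)$, $\omega_0\circ\alpha^{\beta,X}_{s,0}(A)=\Tr(P_0)^{-1}\Tr(Q_sA)$, and $\partial_\lambda P_\lambda=\I\commutator{K(\lambda),P_\lambda}$, $\partial_\lambda Q_\lambda=\I\commutator{K^{\beta,X}(\lambda),Q_\lambda}$ from the same initial value $P_0$, with $K(\lambda)=\calI_{H(\lambda)}(\dot H(\lambda))$ and $K^{\beta,X}(\lambda)=\sum_Z\calI_{H(\lambda),\beta_{X,Z}}(\dot\Phi(\lambda,Z))$ (the first equation being exactly the defining property of the spectral flow, i.e.\ \eqref{eq:automorphic equivalence} read on the tracial state). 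Duhamel's formula together with the fact that conjugation by the propagator of the self-adjoint $K^{\beta,X}$ is a $\norm{\,\cdot\,}_1$-isometry gives
\[
    \norm{P_s-Q_s}_1
    \leq
    \int_0^s\odif{\lambda}\,\sum_{Z\subset\Lambda}
    \norm[\big]{\commutator[\big]{\calI_{H(\lambda)}(\dot\Phi(\lambda,Z))-\calI_{H(\lambda),\beta_{X,Z}}(\dot\Phi(\lambda,Z)),P_\lambda}}_1
    ,
\]
whose summands are at most $2\,\Tr(P_0)\,\gamma^{-1}\,\norm{\dot\Phi(\lambda,Z)}\,\Exp{-\gamma^2/(4\beta_{X,Z}^2)}$ by the commutator version of \cref{lem:bound-almost-inverse-Liouvillian-and-inverse-Liouvillian}, which holds for arbitrary $\dot\Phi(\lambda,Z)$. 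By \eqref{eq:definition-beta-X-Z} the last factor is $\Exp{-\gamma^2/(4\beta^2)}$ when $\dist{X,Z}<\ell$ and $\Exp{-\gamma^2/(4\beta^2)}\Exp{-\gamma^2\dist{X,Z}/4}$ beyond, so summing over $Z$ with $\norm{\dot\Phi}_{b'}<C\supint$ and \eqref{eq:Cvol}---the near terms numbering $\lesssim\abs{X}(\ell+1)^D$, the far ones giving a convergent $\dist{X,Z}$-tail---bounds the first term by $C\,\abs{X}\,(\ell+1)^D\,\norm{A}\,\Exp{-\gamma^2/(4\beta^2)}$ after the trace factors cancel.

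For the second term I would feed $\alpha^{\beta,X}$ and $\alpha^\beta$ into the comparison identity \eqref{eq:comparing-automorphisms}; the generators coincide on all $Z$ with $\dist{X,Z}<\ell$, so only the terms $\calI_{H(\lambda),\beta_{X,Z}}(\dot\Phi(\lambda,Z))-\calI_{H(\lambda),\beta}(\dot\Phi(\lambda,Z))$ with $\dist{X,Z}\geq\ell$ remain, and since $\omega_0$ is a state it is enough to bound $\norm{\alpha^{\beta,X}_{s,0}(A)-\alpha^\beta_{s,0}(A)}$ (the projection argument of the first step is unavailable here, as comparing two Gaussian-smeared maps through $\calI_H$ would reintroduce the un-enhanced factor $\Exp{-\gamma^2/(4\beta^2)}$ on extensively many far terms). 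For each far $Z$ I would replace $\alpha^\beta_{\lambda,0}(A)$ by a local approximation supported within distance $r=\dist{X,Z}/2$ of $X$, at the cost of a Lieb--Robinson remainder $\lesssim\abs{X}\norm{A}\Exp{-\dotsb}$---the generator of $\alpha^\beta$ being an exponentially decaying interaction, by \cref{lem:locality-inv-Liouvillian-beta} and \cref{lem:interaction-for-generator-of-almost-spectral-flow}, so that \cref{lem:LRB} applies---and estimate the commutator of $\calI_{H(\lambda),\beta'}(\dot\Phi(\lambda,Z))$ with that approximation by \cref{lem:locality-inv-Liouvillian-beta}. The main obstacle is the $\beta'=\beta_{X,Z}$ contribution: the Gaussian width of $\calI_{H,\beta_{X,Z}}$ is of order $v/\beta_{X,Z}\sim v\sqrt{\dist{X,Z}}$, so its locality \emph{rate} degrades to order $1/\dist{X,Z}$ and the exponential factor in \eqref{eq:alternative locality} alone gives no decay; one has instead to retain the quadratic Gaussian tail of the $\inf_T$-bound \eqref{eq:lem-locality-inv-Liouvillian-beta-inf}, namely $\Exp{-\beta_{X,Z}^2\dist{X,Z}^2/(4v^2)}$, and note that $\beta_{X,Z}^2\dist{X,Z}^2=\dist{X,Z}^2/(\beta^{-2}+\dist{X,Z})\geq\dist{X,Z}/2$ once $\dist{X,Z}\geq\beta^{-2}$. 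This forces $\ell\gtrsim\beta^{-2}$; together with the requirement that $\ell$ exceed the Lieb--Robinson horizon of $\alpha^\beta$ over $[0,1]$, a choice $\ell=C_\ell\beta^{-4}$ works (using $\beta<1$), and \eqref{eq:Hyp on beta} keeps $b(\beta)=\beta^2/(4v^2)$ and the prefactors in the benign regime of the remark after \cref{lem:locality-inv-Liouvillian-beta}. Summing the $Z$-sum, again controlled by \eqref{eq:Cvol} and $\norm{\Phi}_{b'},\norm{\dot\Phi}_{b'}<C\supint$, bounds the second term by $C\,\abs{X}^2\,\norm{A}\,\Exp{-c\beta^{-2}}$, the extra $\abs{X}$ relative to the first term originating from the Lieb--Robinson remainder.

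Adding the two estimates and absorbing the polynomial $(\ell+1)^D\sim\beta^{-4D}$ from the first term into a slightly smaller $c<\gamma^2/4$ gives the asserted bound $C\abs{X}^2\norm{A}\Exp{-c\beta^{-2}}$. Apart from the far-$Z$ locality estimate, the point that most needs care is the trace bookkeeping in the first term---in particular confirming that $\operatorname{rank}P_\lambda$ is constant along the path so that the $\Tr(P_0)$ factors really do cancel---together with the exact index and sign conventions that make the two occurrences of \eqref{eq:automorphic equivalence} and the propagator identities consistent.
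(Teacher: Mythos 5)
Your proposal follows essentially the same route as the paper's proof: the triangle inequality through the support-adapted intermediate flow \(\alpha^{\beta,X}\), a trace-class argument for the first difference (your propagated-projections formulation is equivalent to the paper's use of \(\abs{\Tr(P[V,W])}\leq\norm{[P,V]}_1\norm{W}\) together with \cref{lem:bound-almost-inverse-Liouvillian-and-inverse-Liouvillian}), and for the second difference the decomposition via conditional expectations plus the key observation that the quadratic Gaussian tail gives \(\beta_{X,Z}^2\dist{X,Z}^2\geq\dist{X,Z}/2\) once \(\dist{X,Z}\geq\beta^{-2}\). The observation about cancellation of the \(\Tr(P_0)\) factors is also correct.

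The gap is quantitative but essential: you assert \(\ell=C_\ell\beta^{-4}\) suffices, without tracking how the Lieb--Robinson prefactor of \(\alpha^\beta\) degrades as \(\beta\to 0\). By \cref{lem:interaction-for-generator-of-almost-spectral-flow}, the interaction \(\Psi_\beta\) generating \(\alpha^\beta\) has \(\norm{\Psi_\beta}_{b(\beta)/3}\sim\beta^{-(2D+3)}\norm{\dot\Phi}_{b'}\) with decay rate \(b(\beta)=\beta^2/(4v^2)\); feeding this into \cref{lem:LRB} at unit time, the prefactor \(C_1(\beta)\leq 8\exp\bigl(2\Ccod{1,b(\beta)/9}\norm{\Psi_\beta}_{b(\beta)/3}\bigr)\) scales like \(\exp(c\beta^{-(4D+3)})\), since \(\Ccod{1,b(\beta)/9}\sim\beta^{-2D}\) by \eqref{eq:Cvol}. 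To defeat this with the decay \(\e^{-b(\beta)\ell/9}\sim\e^{-\beta^2\ell/c}\) from the conditional-expectation remainder of \(\alpha^\beta_{t,s}(A)\), one needs \(\ell\gtrsim\beta^{-(4D+5)}\), which is exactly the paper's choice \(\ell=\tfrac{9}{b(\beta)}\bigl(\beta^{-2}+c\beta^{-(4D+3)}\bigr)\). With \(\ell=C_\ell\beta^{-4}\) the Lieb--Robinson remainder contributes \(\exp\bigl(c\beta^{-(4D+3)}-c'\beta^{-2}\bigr)\), which diverges as \(\beta\to 0\) for every \(D\geq 1\), so your second-term estimate does not close. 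The fix preserves the rest of your argument: taking \(\ell\) of order \(\beta^{-(4D+5)}\) only changes the polynomial \(\ell^{D+1}\) in the first term, which remains absorbable into the exponential \(\e^{-\gamma^2/(8\beta^2)}\) at the cost of shrinking \(c\).
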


Besides proving automorphic equivalence~\eqref{eq:automorphic equivalence} itself, the spectral flow can also be used to prove the local perturbations perturb locally (LPPL) principle.
In this case, \(\dot H(s)\in \alg_{\Lambdap}\) is strictly localized in a perturbation region \(\Lambdap \subset \Lambda\).
The strategy sketched above will yield the following result.

\begin{theorem}
    \label{thm:LPPL}
    Let \(D\in \N\), \(\Cvol > 0\), \(b'>0\), \(C\supint>0\), \(\gamma>0\).
    Then there exist constants \(C\) and \(c > 0\) such that the following holds.
    For all \(\Lambdap \subset \Lambda\in \graphs\) finite, smooth Hamiltonians \(H\) that satisfy the \nameref{assumption:gap-time-dependent} with gap \(\gamma\) and are given by interactions \(\Phi\) such that \(\norm{\Phi}_{b'}<C\supint\), \(\norm{\dot \Phi}_{b'}<C\supint\), and \(\dot\Phi(Z) = 0\) unless \(Z\subset \Lambdap\),
    \begin{equation}
        \label{eq:LPPL}
        \abs[\big]{
            \omega_s(A)
            - \omega_0(A)
        }
        \leq
        C
        \, \abs{X}^2
        \, \norm{A}
        \, \e^{-c \dist{X,\Lambdap}}
        ,
    \end{equation}
    for all \(s\in \intervalcc{0,1}\), \(X\subset \Lambda\) and \(A\in \alg_X\).
\end{theorem}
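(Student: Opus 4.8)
The plan is to run the argument sketched for \cref{thm:almost-spectral-flow}, now exploiting the hypothesis $\dot\Phi(\lambda,Z)=0$ unless $Z\subset\Lambdap$ to turn the frequency‑side error $\e^{-c\beta^{-2}}$ of that theorem into the spatial decay $\e^{-c\dist{X,\Lambdap}}$. Fix $b\in\intervaloo{0,b'}$, choose $\beta$ obeying \eqref{eq:Hyp on beta} (so that the quasi‑locality rate of $\calI_{H,\beta_{X,Z}}$ is $\beta_{X,Z}^2/4v^2$, not $b/2$), and take the threshold $\ell$ in \eqref{eq:definition-beta-X-Z} with $\ell\ge\beta^{-2}$. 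Since $\abs{\omega_s(A)-\omega_0(A)}\le 2\norm{A}$ always, I may assume $\dist{X,\Lambdap}\ge\ell$ and enlarge $C$ at the end to absorb $\dist{X,\Lambdap}<\ell$. Let $\alpha$ be the exact spectral flow generated by $\calI_{H(\lambda)}\paren[\big]{\dot H(\lambda)}=\sum_{Z\subset\Lambdap}\calI_{H(\lambda)}\paren[\big]{\dot\Phi(\lambda,Z)}$ and let $\alpha^{\beta,X}$ be the flow generated by \eqref{eq:generater-local-approximation-almost-spectral-flow}, i.e.\ by $\sum_{Z\subset\Lambdap}\calI_{H(\lambda),\beta_{X,Z}}\paren[\big]{\dot\Phi(\lambda,Z)}$. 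Using the automorphic equivalence \eqref{eq:automorphic equivalence}, $\omega_s=\omega_0\circ\alpha_{0,s}$, I would write
\begin{equation*}
    \omega_s(A)-\omega_0(A)
    =
    \omega_0\paren[\big]{\alpha_{0,s}(A)-\alpha^{\beta,X}_{0,s}(A)}
    +
    \omega_0\paren[\big]{\alpha^{\beta,X}_{0,s}(A)-A}
\end{equation*}
and bound each summand by $C\abs{X}^2\norm{A}\e^{-c\dist{X,\Lambdap}}$.

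For the first (``spectral'') summand I would apply the Duhamel identity \eqref{eq:comparing-automorphisms} to $\alpha$ and $\alpha^{\beta,X}$ and sum its $Z$‑terms back into the difference of generators $\sum_{Z\subset\Lambdap}\bigl(\calI_{H(\lambda)}(\dot\Phi(\lambda,Z))-\calI_{H(\lambda),\beta_{X,Z}}(\dot\Phi(\lambda,Z))\bigr)$. Because $\omega_0\circ\alpha_{0,\lambda}=\omega_\lambda=\Tr(P_\lambda)^{-1}\Tr(P_\lambda\,\cdot\,)$ \emph{exactly}, cyclicity of the trace rewrites the $Z$‑contribution as $\Tr(P_\lambda)^{-1}\Tr\paren[\big]{\commutator[\big]{P_\lambda,\calI_{H(\lambda)}(\dot\Phi(\lambda,Z))-\calI_{H(\lambda),\beta_{X,Z}}(\dot\Phi(\lambda,Z))}\,\alpha^{\beta,X}_{\lambda,s}(A)}$, which by the second bound of \cref{lem:bound-almost-inverse-Liouvillian-and-inverse-Liouvillian} with $q=1$ (controlling the trace‑norm of the commutator with $P_\lambda$) and the cancellation of $\norm{P_\lambda}_1=\Tr(P_\lambda)$ is at most $2\gamma^{-1}\norm{\dot\Phi(\lambda,Z)}\,\norm{A}\,\e^{-\gamma^2/(4\beta_{X,Z}^2)}$ --- so nothing is extensive in $\Lambda$. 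Since $\dist{X,Z}\ge\dist{X,\Lambdap}\ge\ell$ forces $\beta_{X,Z}^{-2}\ge\dist{X,Z}$, each term carries $\e^{-\gamma^2\dist{X,Z}/4}$; extracting $\e^{-\gamma^2\dist{X,\Lambdap}/8}$ and using $\e^{-\gamma^2\dist{X,Z}/8}\le\sum_{x\in X}\e^{-\gamma^2\dist{x,Z}/8}$ together with $\norm{\dot\Phi}_{b'}<\infty$ to perform the geometric sum over $Z$ (anchored at the point of $Z$ nearest to $x$) yields a bound $\le C\abs{X}\,\norm{A}\,\e^{-c\dist{X,\Lambdap}}$.

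For the second (``locality'') summand I would again use \eqref{eq:comparing-automorphisms}, now against the identity automorphism, to get $\norm{\alpha^{\beta,X}_{0,s}(A)-A}\le\int_0^s\sum_{Z\subset\Lambdap}\norm[\big]{\commutator[\big]{\calI_{H(\lambda),\beta_{X,Z}}(\dot\Phi(\lambda,Z)),A}}\odif{\lambda}$, and estimate each commutator by \cref{lem:locality-inv-Liouvillian-beta} --- but using the choice $T=\dist{X,Z}/2v$ \emph{while keeping the $\dist{X,Z}^2$ in the Gaussian tail}, so that for $\dist{X,Z}\ge\ell\ge\beta^{-2}$ the Gaussian term is $\tfrac{1}{\sqrt\pi\,\beta_{X,Z}}\e^{-\dist{X,Z}^2/(4v^2(\beta^{-2}+\dist{X,Z}))}\le C\,\e^{-c\dist{X,Z}}$ after absorbing the $\beta_{X,Z}^{-1}$ prefactor; hence $\norm[\big]{\commutator[\big]{\calI_{H(\lambda),\beta_{X,Z}}(\dot\Phi(\lambda,Z)),A}}\le C\min\{\abs{Z},\abs{X}\}\,\norm{\dot\Phi(\lambda,Z)}\,\norm{A}\,\e^{-c\dist{X,Z}}$. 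Bounding $\min\{\abs{Z},\abs{X}\}\le\abs{X}$ and summing over $Z\subset\Lambdap$ exactly as above (which costs the extra factor $\abs{X}$) gives $\le C\abs{X}^2\,\norm{A}\,\e^{-c\dist{X,\Lambdap}}$; adding the two summands proves \eqref{eq:LPPL} with $C$, $c$ depending only on $D$, $\Cvol$, $b'$, $C\supint$, $\gamma$.

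I expect the only genuinely delicate point to be the one flagged in the remark before \cref{thm:almost-spectral-flow}: the modulation \eqref{eq:definition-beta-X-Z} must be \emph{linear} in $\dist{X,Z}$, so that at the Lieb--Robinson time $T\sim\dist{X,Z}/v$ the Gaussian tail $\e^{-\beta_{X,Z}^2T^2}\sim\e^{-\dist{X,Z}^2/(v^2(\beta^{-2}+\dist{X,Z}))}$ still decays --- linearly --- in $\dist{X,Z}$, while the frequency error $\e^{-\gamma^2\beta_{X,Z}^{-2}/4}$ decays at the same linear rate; a faster modulation would kill the locality, a slower one the spectral error, and this balance is exactly what the condition $\beta<\sqrt{2b}\,v$ in \eqref{eq:Hyp on beta} encodes, together with the bound on $v$ coming from $\norm{\Phi}_{b'}<C\supint$, which makes all constants uniform in $\Lambda$. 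The rest is bookkeeping; note in particular that, the volume being fixed and finite, every flow in sight is an automorphism regardless of any interaction‑norm regularity of its generator, so --- unlike in \cref{thm:almost-spectral-flow}'s interaction‑level proof --- the estimates above can be carried out term by term directly on $\calI_{H(\lambda),\beta_{X,Z}}(\dot\Phi(\lambda,Z))$.
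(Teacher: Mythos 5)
Your proof is correct, but it takes a genuinely different decomposition from the paper's. The paper splits $\omega_s - \omega_0$ via the \emph{uniform} almost spectral flow $\alpha^\beta$: it first estimates $\abs{\omega_s(A) - \omega_0 \circ \alpha^\beta_{0,s}(A)}$ by re-running the two-step argument of \cref{thm:almost-spectral-flow} (i.e.\ passing through $\alpha^{\beta,X}$) with the sums restricted to $Z\subset\Lambdap$, and then estimates $\abs{\omega_0\circ\alpha^\beta_{0,s}(A)-\omega_0(A)}$ by introducing a third, truncated flow $\tilde\alpha^\beta$ whose interaction is the $\Psi_\beta$ of \cref{lem:interaction-for-generator-of-almost-spectral-flow} with all terms meeting $X$ deleted, so that $\tilde\alpha^\beta(A)=A$. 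You instead split directly via the inhomogeneous flow, $\omega_s(A)-\omega_0(A) = \omega_0\paren[\big]{\alpha_{0,s}(A)-\alpha^{\beta,X}_{0,s}(A)} + \omega_0\paren[\big]{\alpha^{\beta,X}_{0,s}(A)-A}$, bound the first (spectral) piece by \cref{lem:bound-almost-inverse-Liouvillian-and-inverse-Liouvillian} with the modulated width, and bound the second (locality) piece by Duhamel against the identity, estimating each commutator $\commutator[\big]{\calI_{H(\lambda),\beta_{X,Z}}(\dot\Phi(\lambda,Z)),A}$ directly with \cref{lem:locality-inv-Liouvillian-beta}.

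This buys two real simplifications for this theorem. First, you never need the interaction $\Psi_\beta$, so \cref{lem:interaction-for-generator-of-almost-spectral-flow}, the auxiliary truncated flow $\tilde\alpha^\beta$, and the interaction-norm bookkeeping all disappear. Second, by comparing $\alpha^{\beta,X}$ against the identity rather than against $\alpha^\beta$, you sidestep having to control the interaction norm of the $Z$-modulated generator; as you correctly note, in fixed finite volume that generator's flow is an automorphism regardless of any such norm, so each commutator can be handled term by term. The cost is that you cannot quote \cref{thm:almost-spectral-flow} as a black box -- but the paper cannot either, since the sums must in any case be rerun with the restriction $\dist{X,Z}\ge\dist{X,\Lambdap}$. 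Your treatment of $\dist{X,\Lambdap}<\ell$ by the trivial bound and of the cancellation $\Tr(P_\lambda)^{-1}\norm{P_\lambda}_1=1$ after cyclicity are both sound, and your bookkeeping for the Gaussian tail (keeping $\dist{X,Z}^2$ in the exponent, using $\beta^{-2}\le\ell\le\dist{X,Z}$ to reduce it to linear decay, absorbing $\beta_{X,Z}^{-1}\le\sqrt{2\dist{X,Z}}$) is correct. One small wording nit: \eqref{eq:comparing-automorphisms} is already a sum over $Z$, so there is nothing to ``sum back''; you simply read off the per-$Z$ generators. Your closing heuristic about the linear modulation in \eqref{eq:definition-beta-X-Z} balancing the two exponential sources matches the paper's parenthetical remark that $\beta$ of order $\dist{X,\Lambdap}^{-1/2}$ would be optimal.
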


\begin{remark}
    A few remarks are in order.
    \begin{thmlist}
        \item
            The uniform gap assumption is of course crucial for the result and in general very difficult to verify, except for weak perturbations of frustration-free systems, see~\cite{michalakis2013stability,nachtergaele2022quasi}.
        \item
            An LPPL for ground states of gapped quantum spin systems was originally proposed in~\cite{BMNS2012}, but the use of the exact spectral flow there meant that the error in~\eqref{eq:LPPL} could not be proved to be exponential.
            An exponential decay was obtained in~\cite{DeRoeckLPPL} for Hamiltonians that are perturbations of free spins but under no additional gap assumption.
            We refer to the discussion in~\cite{DeRoeckLPPL} for previous perturbative results of similar nature.
        \item
            The constants do not depend on \(\Lambdap\).
            In particular, in the scenario of an increasing sequence of volumes \(\Lambda\to\Gamma\), the perturbation may be extensive.
    \end{thmlist}
\end{remark}

\subsection*{Proofs}

Within the proofs, we use \(C>0\) and \(c>0\) as generic constants that might change from line to line.
They can depend on the same parameters that determine \(C\) and \(c\) in the statements.
Typically, they depend on \(D\), \(\Cvol\), \(b\) and \(b'\) but not on \(\Lambda\).

As outlined before, one crucial ingredient to the construction of the almost spectral flow is locality of the automorphism \(\alpha^\beta\).
For the Lieb-Robinson bound from \cref{lem:LRB} to be sufficiently sharp, we first provide an exponentially local interaction for the generator \(\calI_{H(t),\beta}\paren[\big]{\dot H(t)}\).
With the locality provided by Lemma~\ref{lem:locality-inv-Liouvillian-beta}, the proof is rather standard.

\begin{lemma}
    \label{lem:interaction-for-generator-of-almost-spectral-flow}
    Let \(D\in \N\), \(\Cvol > 0\), \(D>b'>b>0\).
    Then there exists a constant \(C > 0\) such that the following holds.
    For all finite \(\Lambda\in \graphs\), \(\beta>0\), smooth Hamiltonians \(H\) given by interactions \(\Phi\) such that \(\norm{\Phi}_{b'}<\infty\) and \(\norm{\dot \Phi}_{b'}<\infty\), there exists an interaction \(\Psi_\beta\) such that
    \begin{equation*}
        \calI_{H(t),\beta}\paren[\big]{\dot H(t)}
        =
        \sum_{Z\subset \Lambda} \Psi_\beta(Z)
        \qquadtext{and}
        \norm{\Psi_\beta}_{b(\beta)/3} < C(\beta) \, \norm{\dot \Phi}_{b'}
        ,
    \end{equation*}
    where
    \begin{equation*}
        b(\beta)
        =
        \min\left\{\frac{b}{2}, \frac{\beta^2}{4v^2}\right\}
        ,\qquad
        C(\beta) = 1 + C \, b(\beta)^{-(D+1)} (\beta + \beta^{-1})
        ,
    \end{equation*}
    and \(v\) is the Lieb-Robinson velocity of \(\Phi\) as defined in \cref{lem:LRB}.
\end{lemma}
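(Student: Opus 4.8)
The plan is the standard construction of an interaction from a quasi-local map, as in~\cite{BMNS2012,NSY2017}, fed by the locality bound~\eqref{eq:alternative locality}. Fix $t$, write $\dot H(t)=\sum_{Z\subset\Lambda}\dot\Phi(t,Z)$, and set $O_Z:=\calI_{H(t),\beta}\paren[\big]{\dot\Phi(t,Z)}$, which is self-adjoint because $\phib$ is even and $\tau_s^H$ is a $*$-automorphism. For $Z\subset\Lambda$ and $n\ge0$ put $B_Z(n):=\Set{x\in\Lambda\given\dist{x,Z}\le n}$, and let $\Pi_W\colon\alg_\Lambda\to\alg_W$ be the conditional expectation given by the normalised partial trace over $\Lambda\setminus W$; it is unital, preserves self-adjointness, and satisfies $\norm{O-\Pi_W(O)}\le\sup\Set{\norm{\commutator{O,V}}\given V\in\alg_{\Lambda\setminus W},\,\norm{V}\le1}$. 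Since $\Lambda$ is finite, $B_Z(n)=\Lambda$ for $n$ large, so $O_Z=\Pi_{B_Z(0)}(O_Z)+\sum_{n\ge1}\paren[\big]{\Pi_{B_Z(n)}-\Pi_{B_Z(n-1)}}(O_Z)$ is a finite telescoping sum whose $n$-th term lies in $\alg_{B_Z(n)}$. I would then \emph{define} $\Psi_\beta(t,Y)$ as the sum of all these terms over the pairs $(Z,n)$ with $B_Z(n)=Y$; by construction $\Psi_\beta(t,Y)\in\alg_Y$ is self-adjoint, smooth in $t$, and $\sum_Y\Psi_\beta(t,Y)=\sum_Z O_Z=\calI_{H(t),\beta}\paren[\big]{\dot H(t)}$, so it remains only to bound $\norm{\Psi_\beta}_{b(\beta)/3}$.

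For the innermost term I would use $\norm{O_Z}\le\tfrac1{\sqrt\pi\beta}\norm{\dot\Phi(t,Z)}$, which follows from $\norm{\int_0^t\odif{s}\,\tau_s^H(A)}\le\abs{t}\norm{A}$ and $\int\odif{t}\,\phib(t)\abs{t}=\tfrac1{\sqrt\pi\beta}$. For $n\ge1$, combining the conditional-expectation bound applied with $V$ supported in $\Lambda\setminus B_Z(n-1)$ — so $\dist{Z,\supp V}\ge n$ and $\min\List{\abs{Z},\abs{\supp V}}\le\abs{Z}$ — with~\eqref{eq:alternative locality} (for $X=Z$) gives
\begin{equation*}
    \norm[\big]{\paren[\big]{\Pi_{B_Z(n)}-\Pi_{B_Z(n-1)}}(O_Z)}
    \le
    C\,\abs{Z}\,\norm{\dot\Phi(t,Z)}\,\paren[\big]{\beta+\beta^{-1}}\,\Exp{-b(\beta)\,n}
    .
\end{equation*}
I would then expand $\norm{\Psi_\beta}_{b(\beta)/3}=\sup_x\sum_{Y\ni x}\norm{\Psi_\beta(t,Y)}\Exp{(b(\beta)/3)\diam{Y}}$ as a double sum over $Z$ and $n\ge\dist{x,Z}$. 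Because $\diam{B_Z(n)}\le\diam{Z}+2n$, the weight costs at most $\Exp{(b(\beta)/3)\diam{Z}}\Exp{(2b(\beta)/3)n}$, which against $\Exp{-b(\beta)n}$ leaves the summable factor $\Exp{-(b(\beta)/3)n}$ — this is exactly why the exponent $b(\beta)/3$, rather than anything close to $b(\beta)/2$, is the right choice. Summing over $n$, using $\abs{Z}\Exp{(b(\beta)/3)\diam{Z}}\le\Ccod{b'-b(\beta)/3,1}\Exp{b'\diam{Z}}\le C\Exp{b'\diam{Z}}$ (the last step — and with it the uniformity of $C$ — by~\eqref{eq:Cvol}, using $0<b(\beta)/3<b<b'<D$), reducing the $Z$-sum to $\norm{\dot\Phi}_{b'}$ via $\Exp{-(b(\beta)/3)\dist{x,Z}}\le\sum_{z\in Z}\Exp{-(b(\beta)/3)\dist{x,z}}$, and finally estimating $\sup_x\sum_{z\in\Lambda}\Exp{-(b(\beta)/3)\dist{x,z}}\le\Cvol\sum_{r\ge0}(r+1)^D\Exp{-(b(\beta)/3)r}\le C\,b(\beta)^{-(D+1)}$, one arrives at $\norm{\Psi_\beta}_{b(\beta)/3}\le C(\beta)\norm{\dot\Phi}_{b'}$ with $C(\beta)$ of the asserted form.

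Everything above is routine; the one real obstacle I foresee is keeping $C$ genuinely uniform. Independence of $\Lambda$ is automatic, since no geometric sum above grows with $\abs{\Lambda}$. Independence of the Lieb–Robinson velocity $v$ is more delicate, because the prefactor $\tfrac{2\beta}{\sqrt\pi b^2v^2}$ of~\eqref{eq:alternative locality} is not bounded in $v$ on its own: it is harmless when $b(\beta)=\beta^2/(4v^2)$, where it equals $\tfrac{8b(\beta)}{\sqrt\pi b^2\beta}\le C\beta^{-1}$, but when $b(\beta)=b/2$ I would return to the proof of \cref{lem:locality-inv-Liouvillian-beta}, take its free parameter $T$ equal to $\dist{X,Y}$ or $\dist{X,Y}/(2v)$ according as $v\le\tfrac12$ or $v>\tfrac12$, and invoke $\e^x-1-x-\tfrac{x^2}{2}\le\tfrac{x^3}{6}\e^x$ in place of the cruder $\le\e^x$ in the bound on $I_{<T}$; this replaces the $v^{-2}$ by a polynomial in $\dist{X,Y}$, which is absorbed into the exponential decay at the price of a harmless loss in the rate. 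The remaining statements — that $\Psi_\beta$ is a time-dependent, self-adjoint interaction — are inherited at once from $\calI_{H,\beta}$ and $\Pi_W$.
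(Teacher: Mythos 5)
Your proposal follows essentially the same strategy as the paper's proof: telescope \(\calI_\beta(\dot\Phi(Z))\) through conditional expectations onto the \(n\)-fattenings of \(Z\), bound the increments by the quasi-locality estimate~\eqref{eq:alternative locality}, recollect into an interaction \(\Psi_\beta\), and tune the weight \(p=b(\beta)/3\) so that the sum over fattenings converges. The construction, the use of the commutator characterisation of conditional expectations, and the role of \(\Ccod{b'-p,1}\) all coincide with the paper's argument, so on the level of ideas this is the same proof.

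Two remarks on the quantitative bookkeeping. First, as written your power counting gives \(b(\beta)^{-(D+2)}\) rather than the stated \(b(\beta)^{-(D+1)}\): the geometric sum over \(n\ge \dist{x,Z}\) contributes a factor \(\asymp (1-\e^{-b(\beta)/3})^{-1}\asymp b(\beta)^{-1}\), and your estimate \(\sup_x\sum_z\e^{-(b(\beta)/3)\dist{x,z}}\le\Cvol\sum_r(r+1)^D\e^{-(b(\beta)/3)r}\le C\,b(\beta)^{-(D+1)}\) counts balls rather than spheres and hence already contains that same factor. The clean fix is Abel summation, \(\sum_z\e^{-p\dist{x,z}}=(1-\e^{-p})\sum_r\abs{B_x(r)}\e^{-pr}\le C\,p^{-D}\), after which the two factors of \((1-\e^{-p})^{\pm1}\) cancel; equivalently, one can sum over \(Z\) first at fixed \(n\), which is what the paper effectively does. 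Second, you are right to flag the \(v^{-2}\) in the prefactor of~\eqref{eq:alternative locality}: the paper's own passage from \(\tilde C(\beta)=2\bigl(\tfrac{2\beta}{\sqrt\pi b^2v^2}+\tfrac1{\sqrt\pi\beta}\bigr)\) to \(\tilde C(\beta)\le C(\beta+\beta^{-1})\) with \(C\) independent of \(\Phi\) is unjustified in the regime \(b(\beta)=b/2\) (small \(v\)). Your proposed remedy — take \(T=\dist{X,Y}\) when \(v\le\tfrac12\) together with the cubic estimate on \(\e^x-1-x-\tfrac{x^2}2\) — does tame the prefactor, but be aware that it also changes the second term of~\eqref{eq:lem-locality-inv-Liouvillian-beta-inf} from \(\e^{-\beta^2 T^2}\) with \(\beta^2T^2\ge b(\beta)\dist{X,Y}\) to \(\e^{-\beta^2\dist{X,Y}^2}\), whose rate \(\beta^2\) need not dominate \(b(\beta)=b/2\) when both \(v\) and \(\beta\) are small; so the \(b(\beta)\) in the lemma statement would indeed have to be replaced, consistent with your own caveat. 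Note, however, that the only place this lemma is invoked, the proof of \cref{thm:almost-spectral-flow}, enforces the hypothesis~\eqref{eq:Hyp on beta}, which forces \(b(\beta)=\beta^2/(4v^2)\); in that regime \(v^{-2}=4b(\beta)/\beta^2\) and the prefactor is genuinely \(O(\beta^{-1})\), so the subtlety you noticed never actually bites.
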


\begin{proof}
    Fix \(t\) and let \(\Omega\subset \Lambda\) and \(O \in \alg_\Omega\).
    Then, denote \(\calI_\beta(O) := \calI_{H(t),\beta}(O)\) and let
    \begin{align*}
        \Delta_0(O)
        &:=
        \cexp[\big]{\Omega}{\calI_\beta(O)}
        \in \alg_{\Omega}
    \shortintertext{and}
        \Delta_k(O)
        &:=
        \cexp[\big]{\Omega_k}{\calI_\beta(O)} - \cexp[\big]{\Omega_{k-1}}{\calI_\beta(O)}
        \in \alg_{\Omega_k}
        \quadtext{for}
        k
        \geq
        1
        .
    \end{align*}
    Here, \(\cexp{Z}{}\) denotes the conditional expectation as defined in~\cite{NSY2019}, which is a standard tool to approximate almost local operators by strictly local ones.
    Then \(\calI_\beta(O) = \sum_{k=0}^\infty \Delta_k(O)\) and the sum is finite since eventually \(\Omega_k = \Lambda\).
    By the properties of the conditional expectation
    \begin{equation*}
        \norm{\Delta_0(O)} \leq \norm{O}
    \end{equation*}
    and together with \cref{lem:locality-inv-Liouvillian-beta} and~\eqref{eq:alternative locality}
    \begin{equation*}
        \norm{\Delta_k(O)}
        \leq
        \norm[\big]{
            \paren[\big]{
                \id - \cexp[\big]{\Omega_{k-1}}{}
            }
            \, \calI_\beta(O)
        }
        \leq
        \tilde{C}(\beta) \, \abs{\Omega} \, \norm{O}
        \, \e^{-b(\beta) \, k}
        ,
    \end{equation*}
    where
    \begin{equation*}
        \tilde{C}(\beta)
        =
        2 \, \paren[\bigg]{
            \frac{2 \, \beta}{\sqrt{\pi} \, b^2 \, v^2}
            + \frac{1}{\sqrt{\pi} \, \beta}
        }
        .
    \end{equation*}
    Then, \(\Psi_\beta\) can be chosen as
    \begin{equation*}
        \Psi_\beta(Z)
        =
        \sum_{k=0}^\infty
        \, \sumstack[r]{Y\subset \Lambda\suchthat\\Y_k=Z}
        \, \Delta_k\paren[\big]{\dot \Phi(Y)}
        .
    \end{equation*}
    To estimate the interaction norm \(\norm{\Psi_\beta}\), for any \(z\in \Lambda\) we bound
    \begin{align*}
        \sumstack{Z\subset \Lambda\suchthat\\z\in Z}
        \, \norm{\Psi_\beta(Z)}
        \, \e^{p \diam{Z}}
        &\leq
        \sumstack[l]{Z\subset \Lambda\suchthat\\z\in Z}
        \sum_{k=0}^\infty
        \, \sumstack[r]{Y\subset \Lambda\suchthat\\Y_k=Z}
        \, \norm[\big]{\Delta_k\paren[\big]{\dot \Phi(Y)}}
        \, \e^{p \diam{Z}}
        \\&=
        \sum_{k=0}^\infty
        \, \sumstack[r]{Y\subset \Lambda}
        \unit_{z\in Y_k}
        \, \norm[\big]{\Delta_k\paren[\big]{\dot \Phi(Y)}}
        \, \e^{p \diam{Y_k}}
        .
    \end{align*}
    The \(k=0\) term is bounded by \(\norm{\dot \Phi}_p \leq \norm{\dot \Phi}_{b'}\).
    For \(k \geq 1\) and \(z\in Y_k\), there exists \(y\in B_z(k) \cap Y\) such that
    \begin{align*}
        \Alignindent
        \, \sumstack[r]{Y\subset \Lambda}
        \unit_{z\in Y_k}
        \, \norm[\big]{\Delta_k\paren[\big]{\dot \Phi(Y)}}
        \, \e^{p \diam{Y_k}}
        \\&\leq
        \sumstack[l]{y\in B_z(k)}
        \, \sumstack[r]{Y\subset \Lambda\suchthat\\y\in Y}
        \, \norm[\big]{\Delta_k\paren[\big]{\dot \Phi(Y)}}
        \, \e^{p \diam{Y}}
        \, \e^{2 p k}
        \\&\leq
        \tilde{C}(\beta)
        \, \e^{(2 p - b(\beta)) \, k}
        \sumstack[l]{y\in B_z(k)}
        \, \sumstack[r]{Y\subset \Lambda\suchthat\\y\in Y}
        \, \abs{Y}
        \, \norm{\dot \Phi(Y)}
        \, \e^{p \diam{Y}}
        \\&\leq
        \tilde{C}(\beta)
        \, \e^{(2 p + 2 \epsi - b(\beta)) \, k}
        \, \Ccod{\epsi,1}
        \, \Ccod{b'-p,1}
        \, \norm{\dot \Phi}_{b'}
        ,
    \end{align*}
    for any \(\epsi>0\) and \(p<b'\).
    The sum over \(k\) is then finite for \(p < b(\beta) / 2\), and we choose \(p=b(\beta)/3\) and \(\epsi = b(\beta)/12\).
    Hence, the constant \(C(\beta)\) from the statement is
    \begin{equation*}
        1
        + \sum_{k=1}^\infty
        \, \tilde{C}(\beta)
        \, \e^{- b(\beta) \, k / 6}
        \, \Ccod{b(\beta)/12,1}
        \, \Ccod{b'-b(\beta)/3,1}
        .
    \end{equation*}
    By definition, \(b(\beta)\in \intervaloo{0,b/2}\).
    Thus, \eqref{eq:Cvol} implies that \(\Ccod{b'-b(\beta)/3,1}\) is uniformly bounded in \(\beta\), while \(\Ccod{b(\beta)/12,1} \leq C \, b(\beta)^{-D}\).
    Moreover, \(\tilde{C}(\beta) \leq C \, (\beta + \beta^{-1})\), and we conclude with \(\sum_{k=1}^\infty \e^{-b(\beta) k / 6} \leq \frac{6}{b(\beta)}\).
\end{proof}

We can now continue with the proof of the almost spectral flow.
First of all, we note the following simple bounds: For every \(p\), \(c>0\)
\begin{equation}
    \label{eq:bound-polynomial-times-exponential}
    \sup_{r \geq 0} \, r^p \, \e^{-c r}
    \leq
    \paren[\big]{\tfrac{p}{\e}}^p \, c^{-p}
    ,
\end{equation}
and for every \(c>0\) and \(L\)
\begin{equation}
    \label{eq:bound-sum-exponential}
    \sum_{n \geq L} \e^{-c n}
    \leq
    \frac{\e^{c}}{c}
    \, \e^{-c \ceil{L}}
    .
\end{equation}

\begin{proof}[Proof of \cref{thm:almost-spectral-flow}]
    Let \(\alpha\) be the exact spectral flow generated by \(\calI_{H(t)}\paren[\big]{\dot H(t)}\) such that \(\omega_s = \omega_0 \circ \alpha_{0,s}\) and \(\alpha^\beta\) be the almost spectral flow.
    Moreover, fix \(X\subset \Lambda\) and let \(\alpha^{\beta,X}\) be the automorphism generated by~\eqref{eq:generater-local-approximation-almost-spectral-flow} where we choose \(\beta_{X,Z}\) as in~\eqref{eq:definition-beta-X-Z} with \(\ell\) to be chosen later.
    We then use triangle inequality to bound
    \begin{equation}
        \label{eq:proof-almost-spectral-flow-triangle-inequality-with-alpha-beta-X}
        \abs[\big]{
            \omega_s(A)
            - \omega_0 \circ \alpha^\beta_{0,s} (A)
        }
        \leq
        \abs[\big]{
            \omega_0 \circ \paren[\big]{\alpha_{0,s}-\alpha^{\beta,X}_{0,s}} (A)
        }
        + \abs[\big]{
            \omega_0 \circ \paren[\big]{\alpha^{\beta,X}_{0,s}-\alpha^{\beta}_{0,s}} (A)
        }
        .
    \end{equation}

    To bound the first term in~\eqref{eq:proof-almost-spectral-flow-triangle-inequality-with-alpha-beta-X}, we use~\eqref{eq:comparing-automorphisms} to obtain
    \begin{equation}
        \label{eq:proof-almost-spectral-flow-alpha-alpha-beta-X-step-1}
        \begin{aligned}
            \Alignindent
            \abs[\Big]{
                \trace[\Big]{
                    P(0) \paren[\big]{\alpha_{0,s}-\alpha^{\beta,X}_{0,s}} (A)
                }
            }
            \\&\leq
            s \, \sup_{t\in\intervalcc{0,s}}
            \, \sumstack[lr]{Z\subset \Lambda}
            \, \abs[\Big]{
                \trace[\Big]{
                    P(t) \, \commutator[\Big]{
                        \calI_{H(t)}\paren[\big]{\dot \Phi(Z,t)} - \calI_{H(t),\beta_{X,Z}}\paren[\big]{\dot \Phi(Z,t)}
                        ,
                        \alpha^{\beta,X}_{t,s}(A)
                    }
                }
            }
            .
        \end{aligned}
    \end{equation}
    Now, for all projections \(P\) and operators \(V\) and \(W\)
    \begin{equation*}
        \abs[\big]{
            \trace[\big]{
                P
                \, \commutator{
                    V,W
                }
            }
        }
        =
        \abs[\big]{
            \trace[\big]{
                \commutator{
                    P,V
                }
                \, W
            }
        }
        \leq
        \norm{
            \commutator{
                P,V
            }
        }_1
        \, \norm{W}
        ,
    \end{equation*}
    by cyclicity of the trace and since the Hilbert space is finite-dimensional.
    Applying this and \cref{lem:bound-almost-inverse-Liouvillian-and-inverse-Liouvillian} to each summand in~\eqref{eq:proof-almost-spectral-flow-alpha-alpha-beta-X-step-1} gives
    \begin{equation*}
        \abs[\Big]{
            \trace[\Big]{
                P(0) \paren[\big]{\alpha_{0,s}-\alpha^{\beta,X}_{0,s}} (A)
            }
        }
        \leq
        2
        \, s
        \, \norm{P}_1
        \, \norm{A}
        \, \gamma^{-1}
        \, \sup_{t\in\intervalcc{0,s}}
        \, \sumstack[lr]{Z\subset \Lambda}
        \, \norm{\dot\Phi(Z,t)}
        \, \Exp{-\frac{\gamma^2}{4\beta_{X,Z}^2}}
        .
    \end{equation*}
    It remains to control the sum.
    With~\eqref{eq:definition-beta-X-Z}, we can factor out \(\e^{-\frac{\gamma^2}{4\beta^2}}\) and organize the sum as
    \begin{equation*}
        \sum_{Z\subset\Lambda}
        \norm{\dot\Phi(Z,t)}
        \, \Exp{-\frac{\gamma^2}{4}\unit_{\dist{X,Z} \geq \ell} \, \dist{X,Z}}
        =
        \sum_{n=0}^\infty
        \sumstack[r]{Z\subset\Lambda:\\\dist{Z,X}=n}
        \, \norm{\dot\Phi(Z,t)}
        \, \Exp{-\frac{\gamma^2}{4}\unit_{n \geq \ell}n}
        .
    \end{equation*}
    Since
    \begin{equation*}
        \sumstack[r]{Z\subset\Lambda:\\\dist{Z,X}=n}
        \norm{\dot\Phi(Z,t)}
        \leq
        \sum_{x\in X}
        \sumstack{z\in\Lambda:\\\dist{z,x}=n}
        \sumstack{Z\subset \Lambda:\\ z \in Z}
        \norm{\dot\Phi(Z,t)}
        \leq
        \abs{X} \, \abs{B_x(n)} \, \norm{\dot\Phi}_0
        ,
    \end{equation*}
    we conclude that
    \begin{align*}
        \sum_{Z\subset\Lambda}
        \norm{\dot\Phi(Z,t)}
        \, \Exp{-\frac{\gamma^2}{4}\unit_{\dist{X,Z} \geq \ell} \dist{X,Z}}
        &\leq
        \abs{X} \, \norm{\dot\Phi}_0 \, \Cvol \, \paren[\Big]{
            \sum_{n=0}^{\floor{\ell}} (n+1)^D
            + \sum_{n=\floor{\ell}+1}^{\infty} (n+1)^D \, \e^{-\frac{\gamma^2}{4}n}
        }
        .
    \end{align*}
    The first sum is bounded by \(C \, \ell^{D+1}\).
    For the second one, we use~\eqref{eq:bound-polynomial-times-exponential} and~\eqref{eq:bound-sum-exponential} to conclude that
    \begin{equation}
        \label{eq:truncated exponential sum}
        \sum_{n=\floor{\ell}+1}^{\infty} (n+1)^D \e^{-\frac{\gamma^2}{4}n}
        \leq
        C \, \frac{\e^{\frac{\gamma^2}{8}}}{\gamma^{2(D+1)}} \, \e^{-\frac{\gamma^2}{8} \ell}
        \leq
        C \, \e^{-c \ell}
        .
    \end{equation}
    Altogether,
    \begin{equation}
        \label{eq:easy}
        \abs[\Big]{
            \trace[\Big]{
                P(0) \, \paren[\big]{\alpha_{0,s}-\alpha^{\beta,X}_{0,s}} (A)
            }
        }
        \leq
        C
        \, \norm{P}_1
        \, \norm{A}
        \, \abs{X}
        \, \norm{\dot\Phi}_0
        \, \paren[\big]{\ell^{D+1} + \e^{-c\ell}}
        \, \e^{-\frac{\gamma^2}{8\beta^2}}
        .
    \end{equation}

    To bound the second term of~\eqref{eq:proof-almost-spectral-flow-triangle-inequality-with-alpha-beta-X}, we use~\eqref{eq:comparing-automorphisms} again to obtain
    \begin{equation*}
        \begin{aligned}
            \Alignindent
            \abs[\Big]{
                \trace[\Big]{
                    P(0) \, \paren[\big]{\alpha^{\beta,X}_{0,s}-\alpha^{\beta}_{0,s}} (A)
                }
            }
            \\&\leq
            s \, \sup_{t\in\intervalcc{0,s}}
            \, \sumstack[lr]{Z\subset \Lambda\suchthat\\\dist{X,Z} \geq \ell}
            \, \abs[\Big]{
                \trace[\Big]{
                    P(t) \, \commutator[\Big]{
                        \calI_{H(t),\beta_{X,Z}}\paren[\big]{\dot \Phi(Z,t)}
                        - \calI_{H(t),\beta}\paren[\big]{\dot \Phi(Z,t)}
                        ,
                        \alpha^{\beta}_{t,s}(A)
                    }
                }
            }
            ,
        \end{aligned}
    \end{equation*}
    where the restriction is due to the fact that otherwise \(\beta_{X,Z} = \beta\) and the terms cancel exactly.
    To obtain a bound for this term, we only use the locality of both flows.
    Denoting \(r=\dist{X,Z}\), we estimate the commutator as
    \begin{subequations}
        \begin{align}
            \Alignindent
            \nonumber
            \norm[\Big]{
                \commutator[\Big]{
                    \calI_{H(t),\beta_{X,Z}}\paren[\big]{\dot \Phi(Z,t)}
                    - \calI_{H(t),\beta}\paren[\big]{\dot \Phi(Z,t)}
                    ,
                    \alpha^{\beta}_{t,s}(A)
                }
            }
            \\\leq{} &
            \label{eq:proof-automorphic-equivalence-beta-X--beta-term-1}
            4 \, \norm[\big]{
                \dot \Phi(Z,t)
            }
            \, \norm[\big]{
                \paren[\big]{\id-\cexp{X_{r/2}}{}}\paren[\big]{\alpha^\beta_{t,s}(A)}
            }
            \\&
            \label{eq:proof-automorphic-equivalence-beta-X--beta-term-2}
            + \norm[\big]{
                \commutator[\big]{
                    \calI_{H(t),\beta}\paren[\big]{\dot \Phi(Z,t)}
                    ,
                    \cexp{X_{r/2}}{\alpha^\beta_{t,s}(A)}
                }
            }
            \\&
            \label{eq:proof-automorphic-equivalence-beta-X--beta-term-3}
            + \norm[\big]{
                \commutator[\big]{
                    \calI_{H(t),\beta_{X,Z}}\paren[\big]{\dot \Phi(Z,t)}
                    ,
                    \cexp{X_{r/2}}{\alpha^\beta_{t,s}(A)}
                }
            }
            ,
        \end{align}
    \end{subequations}
    where \(\cexp{X}{}\) is the conditional expectation as in the proof of \cref{lem:interaction-for-generator-of-almost-spectral-flow}.
    We shall use repeatedly below that in applications of \cref{lem:locality-inv-Liouvillian-beta,lem:interaction-for-generator-of-almost-spectral-flow}
    \begin{equation*}
        b(\beta) = \frac{\beta^2}{4v^2}
    \end{equation*}
    and that \(b(\beta)\) and \(\beta\) are bounded because of~\eqref{eq:Hyp on beta}.

    The first term~\eqref{eq:proof-automorphic-equivalence-beta-X--beta-term-1} is bounded using locality of \(\alpha^\beta_{t,s}\), \cref{lem:interaction-for-generator-of-almost-spectral-flow}:
    The flow \(\alpha^\beta\) is generated by an interaction \(\Psi_\beta\) such that \(
        \norm{\Psi_\beta}_{b(\beta)/3}
        \leq
        \paren[\big]{1 + C \, b(\beta)^{-(D+1)} \, (\beta + \beta^{-1})} \, \norm{\dot \Phi}_{b'}
    \).
    We then use \cref{lem:LRB} with \(b'\to \frac{b(\beta)}{3}\) and \(b\to \frac{2b(\beta)}{9}\) to obtain for all \(\abs{t-s} \leq 1\)
    \begin{equation}
        \label{eq:first term}
        4 \, \norm[\big]{
            \dot \Phi(Z,t)
        }
        \, \norm[\big]{
            \paren[\big]{\id-\cexp{X_{r/2}}{}}
            \, \alpha^\beta_{t,s}(A)
        }
        \leq
        C_1(\beta) \, \norm{\dot \Phi(Z,t)} \, \norm{A} \, \abs{X} \, \e^{-\frac{b(\beta) }{9}r}
        ,
    \end{equation}
    with
    \begin{equation*}
        C_1(\beta)
        \leq
        8 \, \Exp{2\Ccod{1,b(\beta)/9} \, \norm{\Psi_\beta}_{b(\beta)/3}}
        \leq
        8 \, \Exp{c \, b(\beta)^{-D}(1 + b(\beta)^{-(D+1)}(\beta + \beta^{-1}))}
        \leq
        C \, \e^{c \, \beta^{-(4D+3)}}
        ,
    \end{equation*}
    for some \(C>0\), where we used the bound~\(
        \Ccod{b(\beta)/9,1}
        \leq
        C \, b(\beta)^{-D}
    \), see~\eqref{eq:Cvol}.

    To bound the second term~\eqref{eq:proof-automorphic-equivalence-beta-X--beta-term-2} we use \cref{lem:locality-inv-Liouvillian-beta}.
    Specifically, \cref{eq:alternative locality} yields
    \begin{equation}
        \label{eq:second term}
        \norm[\big]{
            \commutator[\big]{
                \calI_{H(t),\beta}\paren[\big]{\dot \Phi(Z,t)}
                ,
                \cexp{X_{r/2}}{\alpha^\beta_{t,s}(A)}
            }
        }
        \leq
        C \, \beta^{-1} \, \norm{\dot \Phi(Z,t)} \, \norm{A} \, \abs{X}
        \, \e^{-\frac{b(\beta) }{2} r}
        ,
    \end{equation}
    for some \(C>0\).

    To control the last term~\cref{eq:proof-automorphic-equivalence-beta-X--beta-term-3}, we use the same \cref{lem:locality-inv-Liouvillian-beta}, this time in the form~\eqref{eq:lem-locality-inv-Liouvillian-beta-inf}, because \(\beta\) is replaced by \(\beta_{X,Z}\).
    We shall choose again \(T = \frac{r}{2v}\).
    With \(\beta_{X,Z} \leq \beta\), the first term of~\eqref{eq:lem-locality-inv-Liouvillian-beta-inf} is bounded by \(\beta \, \e^{-\frac{b}{2}r}\), up to the prefactors.
    For the second term, we use the Gaussian decay.
    For \(r\geq \ell\), we have that
    \begin{equation*}
        \beta_{X,Z}^2 \, T^2
        =
        \frac{1}{\beta^{-2}+r}\frac{r^2}{(2v)^2}
        \geq
        \frac{\beta^2}{1+r} \, \frac{r^2}{(2v)^2}
        \geq
        \frac{b(\beta) \, r}{2}
    \end{equation*}
    and in turn
    \begin{equation*}
        \frac{1}{\beta_{X,Z}} \, \e^{-\beta_{X,Z}^2 T^2}
        \leq
        \frac{1}{\beta} \, \sqrt{1+r}
        \, \e^{-\frac{b(\beta)}{2} r}
        ,
    \end{equation*}
    where we used \(\beta \leq 1\) and \(r \geq 1\) repetitively.
    We then conclude that
    \begin{equation*}
        \norm[\big]{
            \commutator[\big]{
                \calI_{H(t),\beta_{X,Z}}\paren[\big]{\dot \Phi(Z,t)}
                ,
                \cexp[\big]{X_{r/2}}{\alpha^\beta_{t,s}(A)}
            }
        }
        \leq
        C \, \abs{X} \, \norm{\dot \Phi(Z,t)} \, \norm{A}
        \, \paren[\Big]{
            \e^{-\frac{b}{2}r}
            + \beta^{-1} \, \sqrt{r} \, \e^{-\frac{b(\beta)}{2} r}
        }
        .
    \end{equation*}

    Combining the three terms and handling the sum \(Z\subset\Lambda\) with \(d(X,Z)\geq\ell\) as in the bound for~\eqref{eq:proof-almost-spectral-flow-alpha-alpha-beta-X-step-1}, we obtain
    \begin{equation*}
        \begin{aligned}
            \Alignindent
            \abs[\Big]{
                \trace[\Big]{
                    P(0) \, \paren[\big]{\alpha^{\beta,X}_{0,s}-\alpha^{\beta}_{0,s}} (A)
                }
            }
            \\&\leq
            C
            \, \norm{P}_1
            \, \norm{A}
            \, \abs{X}^2
            \, C\supint
            \, \sum_{n=\ell}^\infty
            \paren[\Big]{
                \e^{c \, \beta^{-(4D+3)}}
                \, \e^{-\frac{b(\beta)}{9}n}
                + \beta^{-1} \, \paren[\big]{1+\sqrt{n}} \, \e^{-\frac{b(\beta)}{2}n}
                + \e^{-\frac{b}{2}n}
            }
            \\&\leq
            C
            \, \norm{P}_1
            \, \norm{A}
            \, \abs{X}^2
            \, \paren[\Big]{
                \e^{c \, \beta^{-(4D+3)}}
                \, \beta^{-2}
                \, \e^{-\frac{b(\beta)}{9} \ell}
                + \beta^{-4} \, \e^{-\frac{b(\beta)}{4}\ell}
                + \e^{-\frac{b}{2}\ell}
            }
            ,
        \end{aligned}
    \end{equation*}
    by using~\eqref{eq:bound-polynomial-times-exponential}, \eqref{eq:bound-sum-exponential} and the properties of \(b(\beta)\) and \(\beta\).
    We now choose \(\ell = \tfrac{9}{b(\beta)} \, \paren{\beta^{-2} + c \, \beta^{-(4D+3)}}\) and absorb the polynomial dependence on \(\beta^{-1}\) in the exponential to obtain the upper bound
    \begin{equation*}
        C
        \, \norm{P}_1
        \, \norm{A}
        \, \abs{X}^2
        \, \e^{-c \, \beta^{-2}}
        .
    \end{equation*}
    Plugging this choice of \(\ell\) into~\eqref{eq:easy} and similarly absorbing the polynomial dependence, we obtain the same upper bound.
\end{proof}

It remains to prove LPPL\@.

\begin{proof}[Proof of \cref{thm:LPPL}]
    We first use triangle inequality
    \begin{equation*}
        \abs[\big]{
            \omega_s(A)
            - \omega_0(A)
        }
        \leq
        \abs[\big]{
            \omega_s(A)
            - \omega_0 \circ \alpha^\beta_{0,s}(A)
        }
        + \abs[\big]{
            \omega_0 \circ \alpha^\beta_{0,s}(A)
            - \omega_0 (A)
        }
        .
    \end{equation*}
    To bound the first summand, we again write
    \begin{equation*}
        \begin{aligned}
            \Alignindent
            \abs[\Big]{
                \trace[\Big]{
                    P(0) \, \paren[\big]{\alpha_{0,s}-\alpha^{\beta}_{0,s}} (A)
                }
            }
            \\&\leq
            s \, \sup_{t\in\intervalcc{0,s}}
            \, \sumstack[r]{Z\subset \Lambda\suchthat\\\dist{X,Z} \geq \dist{X,\Lambdap}}
            \, \abs[\Big]{
                \trace[\Big]{
                    P(t) \, \commutator[\Big]{
                        \calI_{H(t)}\paren[\big]{\dot \Phi(Z,t)}
                        - \calI_{H(t),\beta}\paren[\big]{\dot \Phi(Z,t)}
                        ,
                        \alpha^{\beta}_{t,s}(A)
                    }
                }
            }
            ,
        \end{aligned}
    \end{equation*}
    since \(\norm{\dot \Phi(Z,t)} = 0\) if \(\dist{X,Z} < \dist{X,\Lambdap}\) by assumption.
    Following the proof of \cref{thm:almost-spectral-flow}, we obtain
    \begin{equation*}
        \abs[\big]{
            \omega_s(A)
            - \omega_0 \circ \alpha^\beta_{0,s}(A)
        }
        \leq
        C
        \, \abs{X}^2
        \, \norm{A}
        \, \e^{-c \dist{X,\Lambdap}}
        .
    \end{equation*}
    For the second part, we use that \(\alpha^\beta\) acts almost trivially away from \(\Lambdap\).
    For this, we consider another automorphism \(\alphat^\beta\) generated by
    \begin{equation*}
        \Psit_\beta(Z,t)
        =
        \begin{cases}
            \Psi_\beta(Z,t) & \text{if }Z \cap X = \emptyset, \\
            0 & \text{otherwise,}
        \end{cases}
    \end{equation*}
    which is such that \(\alphat^\beta_{t,s}(A) = A\).
    Moreover, \(\Psi_\beta(Z,t)\), \(\Psit_\beta(Z,t) = 0\) if \(Z\cap \Lambdap = \emptyset\) by construction of \(\Psi_\beta\) in the proof of \cref{lem:interaction-for-generator-of-almost-spectral-flow} and the assumption that \(\dot \Phi(Z,t) = 0\) unless, \(Z \subset \Lambdap\).
    Then,
    \begin{align*}
        \abs[\big]{
            \omega_0 \circ \alpha^\beta_{0,s}(A)
            - \omega_0 (A)
        }
        &\leq
        \norm{P(0)}_1
        \, \norm[\big]{
            \alpha^\beta_{0,s}(A)
            -\alphat^\beta_{0,s}(A)
        }
        \\&\leq
        s
        \, \norm{P(0)}_1
        \,
        \sup_t
        \, \sumstack[lr]{Z\subset \Lambda}
        \, \norm[\big]{
            \commutator[\big]{
                \Psi(Z,t) - \Psit(Z,t)
                , \alphat^\beta_{t,s}(A)
            }
        }
        \\&\leq
        2
        \, s
        \, \norm{P(0)}_1
        \, \norm{A}
        \, \sup_t
        \, \sumstack[l]{x\in X}
        \, \sumstack{z\in \Lambdap}
        \, \sumstack[r]{Z\subset \Lambda\suchthat\\x,z\in Z}
        \, \norm[\big]{\Psi(Z,t)}
        \\&\leq
        2
        \, s
        \, \norm{P(0)}_1
        \, \norm{A}
        \, \abs{X}
        \, \norm{\Psi}_{b(\beta)/3}
        \, \sup_{x\in \Lambda}
        \, \sumstack{z\in \Lambdap}
        \, \e^{-b(\beta) \dist{x,z} /3}
        .
    \end{align*}
    Bounding this sum and combining both bounds gives the result.
\end{proof}
Note that the choice of Gaussian width \(\beta\) could be optimized to give the sharpest decay rate (following the above, \(\beta\) should be of order \(d(X,\Lambdap)^{-\frac{1}{2}}\)).

\section{Exponential clustering revisited}
\label{sec:clustering}

Finally, Gaussian filters were already used in previous proofs of the exponential clustering theorem: Ground state correlations decay in the presence of a spectral gap.
In the present context of quantum spin systems, exponential decay of correlations were proved in~\cite{NS2006} in the infinite-volume limit (the gap refers in this case to the gap of the GNS Hamiltonian), in the limit of finite volumes in~\cite{HK2006}, where the splitting of the ground state energies in finite volume is assumed to vanish in the limit, while the latter assumption in removed in~\cite{BBRF2020rational} but the decay is only superpolynomial.
In this section, we prove that correlations decay indeed exponentially under the sole assumption of a spectral gap, even if there is eigenvalue splitting in the ground state.

\begin{assumption}
    \label{assumption:gap-and-width}
    Let \(\Lambda\) be finite.
    Let \(\Phi\) be an interaction satisfying \(\norm{\Phi}_{b'}<\infty\) for some \(b'>0\), and let \(H\) be the corresponding Hamiltonian.
    We assume that the spectrum of \(H\) is of the form
    \begin{equation*}
        \sigma\paren{H} = \sigma_0 \cup \sigma_1
    \end{equation*}
    with \(\inf(\sigma_1)-\sup(\sigma_0) \geq \gamma > 0\) and \(\diam{\sigma_0} \leq \Delta < \gamma/4\).
\end{assumption}

As above, we denote by \(P\) the spectral projection associated with the spectral patch \(\sigma_0\).
Note that the condition \(\Delta<\gamma/4\) is not tight but will simplify the estimates.
What is needed in the proof is that \(-\Delta+\gamma/2 \geq c > 0\).

\begin{theorem}
    \label{thm:exponential clustering}
    Let \(D\in \N\), \(\Cvol > 0\), \(b'>0\), \(C\supint>0\), \(\gamma>0\) and \(\Delta>0\).
    Then there exist constants \(C\), \(c > 0\), such that the following holds.
    For all \(\Lambda\in \graphs\) finite and Hamiltonians \(H\) that satisfy \cref{assumption:gap-and-width} with gap \(\gamma\) and width \(\Delta\) and are given by interactions \(\Phi\) such that \(\norm{\Phi}_{b'}<C\supint\) the following holds:

    For any normalized state \(\Omega \in \mathrm{Ran}(P)\),
    \begin{equation*}
        \abs{
            \braket{\Omega,A \, B \, \Omega}
            - \braket{\Omega,A \, P \, B \, \Omega}
        }
        \leq
        C \, \norm{P}_1 \norm{A} \, \norm{B} \, \e^{-c d(X,Y)}
    \end{equation*}
    for all disjoint \(X\), \(Y\subset \Lambda\) and \(A\in\alg_X\), \(B\in\alg_Y\).
\end{theorem}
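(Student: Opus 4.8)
The plan is to reduce the claim to an estimate on the function $t\mapsto \langle\Omega, \tau_t^H(A)B\Omega\rangle$ and then apply a carefully chosen filter, in the spirit of earlier proofs of exponential clustering but with the step function from the introduction replaced by a Gaussian-smoothed one. First, since $\Omega = P\Omega$, we have $\langle\Omega, AB\Omega\rangle - \langle\Omega, APB\Omega\rangle = \langle\Omega, AP^\perp B\Omega\rangle$, so it suffices to bound the latter. Writing $\sigma(H) = \sigma_0\cup\sigma_1$ as in \cref{assumption:gap-and-width} and $P_\lambda$ for the eigenprojections, expand
\[
g(t) := \langle\Omega, \tau_t^H(A)B\Omega\rangle
= \sum_{\mu,\mu'\in\sigma_0}\ \sum_{\lambda\in\sigma(H)}\Exp{\I(\mu'-\lambda)t}\,\langle\Omega, P_{\mu'}AP_\lambda BP_\mu\Omega\rangle
\]
and split $g = g_0 + g_1$ according to $\lambda\in\sigma_0$ or $\lambda\in\sigma_1$. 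Then $g_0$ is a sum of exponentials with frequencies in $\intervalcc{-\Delta,\Delta}$, $g_1$ is one with frequencies in $\intervaloc{-\infty,-\gamma}$, and $g_1(0) = \langle\Omega, AP^\perp B\Omega\rangle$ is exactly the target. The analogous decomposition $\tilde g(t) := \langle\Omega, B\tau_t^H(A)\Omega\rangle = \tilde g_0(t)+\tilde g_1(t)$ has $\tilde g_0$ frequency-supported in $\intervalcc{-\Delta,\Delta}$ and $\tilde g_1$ in $\intervalco{\gamma,\infty}$, and $g - \tilde g = C$ where $C(t) := \langle\Omega, \commutator{\tau_t^H(A),B}\Omega\rangle$.

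Next I would introduce the filter $f$ obtained by convolving a Gaussian with a Heaviside step centred inside the gap at $\omega_0 := \tfrac{\Delta+\gamma}{2}$, i.e.\ $\widehat f(\omega) = \tfrac{1}{2\sqrt{2\pi}}\paren[\big]{1 + \operatorname{erf}(\tfrac{\omega-\omega_0}{2\beta})}$, equivalently $f(t) = \tfrac12\delta(t) + \tfrac{\I}{2\pi t}\Exp{\I\omega_0 t-\beta^2 t^2}$, so that $\int f(t)\Exp{\I\xi t}\odif{t} = \tfrac12\operatorname{erfc}(\tfrac{\xi+\omega_0}{2\beta})$. With $\eta := \tfrac{\gamma-\Delta}{2}>0$ (positive precisely because $\Delta<\gamma/4<\gamma$), this number differs from $1$ by at most $\tfrac12\Exp{-\eta^2/4\beta^2}$ when $\xi\le-\gamma$, and is at most $\tfrac12\Exp{-\eta^2/4\beta^2}$ when $\xi\ge-\Delta$. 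Applying $\int f(t)(\cdot)\odif{t}$ term by term therefore yields $\int f\cdot g_1 = g_1(0) + R_1$, $\int f\cdot g_0 = R_2$ and $\int f\cdot\tilde g = R_3$, with each spectral error $R_j$ bounded by $C\,\norm{P}_1\norm{A}\norm{B}\Exp{-\eta^2/4\beta^2}$; here one must estimate the $\sum_{\nu\in\sigma_1}$ appearing in $R_1$ through $\norm{P^\perp(\cdot)}^2$ rather than by Cauchy–Schwarz over $\sigma_1$, so that only $\#\sigma_0\le\norm{P}_1$ enters and no factor $\dim\Hi_\Lambda$ appears. Since $\int f\cdot g - \int f\cdot\tilde g = \int f\cdot C$, combining gives $g_1(0) = \int f\cdot C + O\paren[\big]{\norm{P}_1\norm{A}\norm{B}\Exp{-\eta^2/4\beta^2}}$.

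It remains to bound $\int f(t)C(t)\odif{t}$. Because $X\cap Y=\emptyset$ we have $C(0) = \langle\Omega,\commutator{A,B}\Omega\rangle = 0$, so the $\delta$-part of $f$ drops out and $\abs{\int f\cdot C}\le \int\tfrac{1}{2\pi\abs t}\Exp{-\beta^2 t^2}\abs{C(t)}\odif{t}$, the apparent singularity at $t=0$ being harmless since $\abs{C(t)}/\abs t$ stays bounded there. The Lieb–Robinson bound \cref{lem:LRB} gives $\abs{C(t)}\le 2\min\List{\abs X,\abs Y}\norm A\norm B(\Exp{bv\abs t}-1)\Exp{-b\dist{X,Y}}$ as well as the trivial bound $2\norm A\norm B$; splitting the integral at $\abs t\sim\dist{X,Y}/v$ and using that $(\Exp{bv\abs t}-1)/\abs t$ is bounded near $0$, one gets $\abs{\int f\cdot C}\le C\norm A\norm B\paren[\big]{\Exp{b^2v^2/4\beta^2}\Exp{-b\dist{X,Y}} + \Exp{-c'\dist{X,Y}^2/\beta^2}}$ up to polynomial prefactors in $1/\beta$ and $\dist{X,Y}$ (and a volume factor, absorbed using the sharper form of $\dist{X,Y}$ in \cref{lem:LRB}). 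Finally I would choose $\beta^{-2}$ proportional to $\dist{X,Y}/(bv^2)$, turning $\Exp{b^2v^2/4\beta^2}\Exp{-b\dist{X,Y}}$ into $\Exp{-b\dist{X,Y}/2}$ and $\Exp{-\eta^2/4\beta^2}$ into $\Exp{-c\dist{X,Y}}$; absorbing the polynomial prefactors into the exponentials yields the stated bound $C\norm{P}_1\norm A\norm B\Exp{-c\dist{X,Y}}$, the case of small $\dist{X,Y}$ being trivial after enlarging $C$.

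The main obstacle will be the construction and analysis of the filter $f$: it must simultaneously reconstruct $g_1(0)$ up to a spectral error that is genuinely exponentially small in $1/\beta$ — possible only because \cref{assumption:gap-and-width} leaves a window $\intervaloo{\Delta,\gamma}$ of positive width for the smoothed step, which is where the hypothesis $\Delta<\gamma/4$ is used — and have a time profile whose Gaussian tail $\Exp{-\beta^2 t^2}$ is strong enough, against the Lieb–Robinson growth $\Exp{bv\abs t}$ of $C(t)$, to leave an exponential in $\dist{X,Y}$ after optimising $\beta$. Checking that these two error mechanisms balance (forcing $\beta^{-2}\sim\dist{X,Y}$) and that all polynomial-in-$\beta$ prefactors and the volume factors from \cref{lem:LRB} can be absorbed is the bulk of the work; a secondary point is keeping the spectral errors proportional to $\norm{P}_1$ rather than to $\dim\Hi_\Lambda$.
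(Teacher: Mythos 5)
Your proposal is correct and follows essentially the same route as the paper: a Gaussian-smoothed Heaviside filter with step centred inside the gap, spectral decomposition of the correlation function into a frequency component in $[-\Delta,\Delta]$, one at or below $-\gamma$, and the commutator part, with the first two controlled by the complementary error function and the last by the Lieb--Robinson bound after splitting the time integral, and finally the choice $\beta^{-2}\sim d(X,Y)$. The paper packages the same three pieces as $\braket{\Omega,\commutator{\calJ_{H,\beta}(A),B}\,\Omega}$, $\braket{\Omega,B\,\calJ_{H,\beta}(A)\,\Omega}$, and $\braket{\Omega,(PAP^\perp - P\calJ_{H,\beta}(A))BP\,\Omega}$ and places the step at $\gamma/2$ rather than $(\gamma+\Delta)/2$, but these are cosmetic differences.
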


\begin{proof}
    For the proof, we define the filter function
    \begin{equation*}
        g_\beta(t)
        =
        \frac{\Exp{-\I t\frac{\gamma}{2}}}{\sqrt{2}\beta}
        \, \paren[\bigg]{
            \sqrt{\frac{\pi}{2}} \, \delta_0 (t)
            + \frac{\I}{\sqrt{2\pi}} \, \pv[\Big]{\frac{1}{t}}
        } \, \phi_\beta(t)
        ,
    \end{equation*}
    where \(\pv[\big]{\frac{1}{t}}\) denotes the principal value distribution, and let
    \begin{equation*}
        \calJ_{H,\beta}(A)
        =
        \frac{1}{\sqrt{2\pi}}
        \, \int_{-\infty}^\infty \odif{t} \, g_\beta(t) \, \tau_t^{H}(A)
        .
    \end{equation*}
    The Fourier transform of \(\hat{g}_\beta\) is given by
    \begin{equation*}
        \hat{g}_\beta(\omega)
        =
        \frac{1}{\sqrt{2}\beta}
        \, \paren{\Theta_{\gamma/2} \star \hat \phi_\beta}(\omega)
        =
        \frac{1}{2\beta\sqrt {\pi}}
        \, \int_{-\infty}^{\omega-\gamma/2} \odif{\xi} \, \Exp{-\frac{\xi^2}{4\beta^2}}
        =
        \frac{1}{\sqrt{\pi}}
        \, \int_{-\infty}^{\frac{\omega-\gamma/2}{2\beta}} \odif{x} \, \Exp{-x^2}
        ,
    \end{equation*}
    where \(\Theta_a\) is the Heaviside step function with discontinuity at \(a\).

    Before we continue the proof, we note some fact about the Gaussian error function.
    We have
    \begin{equation*}
        1-\frac{1}{\sqrt{\pi}} \, \int_{-\infty}^z \odif{x} \, \Exp{-x^2}
        =
        \frac{1}{\sqrt{\pi}} \, \int_z^{\infty} \odif{x} \, \Exp{-x^2}
        \leq
        \frac{1}{2 \, \sqrt{\pi} \, z} \, \int_z^\infty \odif{x} \, 2 \, x \, \Exp{-x^2}
        =
        \frac{1}{2 \, \sqrt{\pi}} \, \frac{\Exp{-z^2}}{z}
    \end{equation*}
    whenever \(z>0\), and similarly
    \begin{equation*}
        \frac{1}{\sqrt{\pi}} \int_{-\infty}^z \odif{x} \, \Exp{-x^2}
        \leq
        \frac{1}{2\,\sqrt{\pi}} \, \frac{\Exp{-z^2}}{\abs{z}}
    \end{equation*}
    if \(z<0\).
    Hence, the Fourier transform of the filter function satisfies
    \begin{align*}
        \hat{g}_\beta(\omega)
        &\leq
        \frac{\beta}{\sqrt{\pi} \, \abs{\omega - \gamma/2} }
        \, \Exp{-\frac{(\omega-\gamma/2)^2}{4\beta^2}}
        \qquad\text{for \(\omega<\gamma/2\)}
    \shortintertext{and}
        1- \hat{g}_\beta(\omega)
        &\leq
        \frac{\beta}{\sqrt{\pi} \, (\omega - \gamma/2)}
        \, \Exp{-\frac{(\omega-\gamma/2)^2}{4\beta^2}}
        \qquad\text{for \(\omega>\gamma/2\).}
    \end{align*}

    We now decompose the correlation into three terms
    \begin{subequations}
        \begin{align}
            \braket{\Omega, A \, B \Omega} - \braket{\Omega, A \, P \, B \, \Omega}
            ={}&
            \braket{\Omega, A \, P^\perp \, B \, \Omega} \nonumber
            \\={}&
            \braket[\big]{
                \Omega
                ,
                \commutator{\calJ_{H,\beta}(A),B} \, \Omega
            }
            \label{I}
            \\&+
            \braket[\big]{
                \Omega
                ,
                B \, \calJ_{H,\beta}(A) \, \Omega}
            \label{II}
            \\&+
            \braket[\big]{
                \Omega,
                \paren[\big]{P \, A \, P^\perp - P \, \calJ_{H,\beta}(A)} \, B \, P \, \Omega}
            \label{III}
            .
        \end{align}
    \end{subequations}
    In the following we will show that each term decays exponentially in \(d(X,Y)\) with the choice \(\beta = \frac{\gamma}{2 \, \sqrt{d(X,Y)}}\).

    We start by bounding~\eqref{II}, for which we observe
    \begin{equation*}
        \calJ_{H,\beta}(A) \, P
        =
        \sumstack[l]{\mu\in \sigma(H)}
        \sum_{\nu\in \sigma_0}
        \hat{g}_\beta(\nu-\mu)
        \, P_\mu \, A \, P_\nu
        .
    \end{equation*}
    Since \(\nu-\mu \leq \Delta < \gamma/4\), we obtain the bound
    \begin{equation}
        \label{II bounded}
        \norm{
            \calJ_{H,\beta}(A) \, P
        }
        =
        \norm{A}
        \, \sum_{\nu\in \sigma_0}
        \norm[\Big]{
            \sum_{\mu\in \sigma(H)}
            \hat{g}_\beta(\nu-\mu)
            \, P_\mu
        }
        \leq
        \frac{4 \, \abs{\sigma_0}}{\sqrt{\pi}}
        \, \frac{\beta}{\gamma}
        \, \Exp{-\frac{1}{64}\paren*{\frac{\gamma}{\beta}}^2}
        \, \norm{A}
        .
    \end{equation}
    For~\eqref{III} we bound
    \begin{equation*}
        \norm{
            P \, A \, P^\perp - P \, \calJ_{H,\beta}(A) \, P^\perp
        }
        =
        \norm[\bigg]{
            \sum_{\mu\in \sigma_0}
            \sum_{\nu\in \sigma_1}
            \paren[\big]{1-\hat{g}_\beta(\nu-\mu)}
            \, P_\mu \, A \, P_\nu
        }
        \leq
        \frac{2 \, \abs{\sigma_0}}{\sqrt{\pi}}
        \, \frac{\beta}{\gamma}
        \, \Exp{-\frac{1}{16}\paren*{\frac{\gamma}{\beta}}^2}
        \, \norm{A}
    \end{equation*}
    because \(\nu-\mu \geq \gamma\).
    And together with~\eqref{II bounded}, we obtain
    \begin{equation}
        \label{III bounded}
        \norm{P \, A \, P^\perp - P \, \calJ_{H,\beta}(A)}
        \leq
        \frac{6 \, \abs{\sigma_0}}{\sqrt{\pi}}
        \, \frac{\beta}{\gamma}
        \, \Exp{-\frac{1}{64}\paren*{\frac{\gamma}{\beta}}^2}
        \, \norm{A}
        .
    \end{equation}

    Finally, the commutator \(\commutator{\calJ_{H,\beta}(A),B}\) in~\eqref{I} is bounded using an argument similar to that of Lemma~\ref{lem:locality-inv-Liouvillian-beta}.
    As we do there, we decompose the integral defining \(\calJ_{H,\beta}(A)\) into \(\abs{t}\leq T\) and \(\abs{t}>T\) and use the Lieb-Robinson bound, Lemma~\ref{lem:LRB}, to estimate the short time part while the long time contribution can be bounded by the Gaussian decay.
    Since \(d(X,Y)>0\), we have that \(\commutator{A,B} = 0\) and so the \(\delta_0\)-contribution vanishes.
    Therefore,
    \begin{align*}
        \Alignindent
        \norm[\bigg]{
            \int_{\abs{t}< T} \odif{t} \, g_{\beta}(t) \, \commutator{\tau^H_t(A),B}
        }
        \\&\leq
        \frac{2 \, \Ccod{1,b'-b}^{-1}}{\beta \, \sqrt{\pi}}
        \, \norm{A} \, \norm{B}
        \, \min \List[\big]{\abs{X},\abs{Y}}
        \, \sup_{t} \abs{\phi_\beta (t)}
        \, \e^{-b\dist{X,Y}}
        \, \lim_{\epsilon \to 0}
        \, \int_{\epsilon}^{T} \odif{t} \, \frac{1}{t} \, \paren[\big]{\e^{bvt}-1}
        .
    \end{align*}
    The mean value theorem implies that \(t^{-1} \, (\e^{bvt}-1) \leq b \, v \, \e^{bvt}\) and so
    \begin{equation*}
        \lim_{\epsilon \to 0} \int_{\epsilon}^{T} \odif{t}
        \, \frac{1}{t} \, \paren[\big]{\e^{bvt}-1}
        \leq
        \paren[\big]{\e^{bvT}-1}
        .
    \end{equation*}
    Since \(\sup \abs{\phi_\beta(t)} = \frac{\beta}{\sqrt{\pi}}\), we conclude that
    \begin{equation}
        \label{I.1 bounded}
        \norm[\bigg]{
            \int_{\abs{t}< T} \odif{t} \, g_{\beta}(t) \, \commutator{\tau^H_t(A),B}
        }
        \leq
        \frac{2 \, \Ccod{1,b'-b}^{-1} \, \norm{A} \, \norm{B}}{\pi}
        \, \min \List[\big]{\abs{X},\abs{Y}}
        \, \e^{-b\dist{X,Y}} \, \paren[\big]{\e^{bvT}-1}
        .
    \end{equation}
    For \(\abs{t} \geq T\), we use the simple norm bound on the commutator and
    \begin{equation*}
        \int_T^\infty \odif{t} \, \frac{\phi_\beta(t)}{t}
        \leq
        \frac{1}{T} \, \int_T^\infty \odif{t} \, \phi_\beta(t)
        \leq
        \frac{1}{2 \, T^2 \, \sqrt{\pi}} \, \e^{-\beta^2 T^2}
    \end{equation*}
    to conclude that
    \begin{equation}
        \label{I.2 bounded}
        \norm[\bigg]{
            \int_{\abs{t} > T} \odif{t} \, g_{\beta}(t) \, \commutator{\tau^H_t(A),B}
        }
        \leq
        \frac{2 \, \norm{A} \, \norm{B}}{T^2 \, \sqrt{\pi}} \, \e^{-\beta^2 T^2}
        .
    \end{equation}
    Together, \eqref{I.1 bounded}, \eqref{I.2 bounded} and the choice \(T=\frac{d(X,Y)}{2v}\) yield the following bound on~\eqref{I}:
    \begin{equation}
        \label{I bounded}
        \norm[\bigg]{
            \int_{-\infty}^\infty \odif{t} \, g_{\beta}(t) \, \commutator{\tau^H_t(A),B}
        }
        \leq
        \frac{2 \, \norm{A} \, \norm{B}}{\pi}
        \, \paren[\bigg]{
            \frac{\min \List[\big]{\abs{X},\abs{Y}}}{\Ccod{1,b'-b}}
            \, \e^{-\frac{b d(X,Y)}{2}}
            + \frac{4 \, v^2\, \sqrt{\pi}}{d(X,Y)^2}
            \, \e^{-\frac{\beta^2 d(X,Y)^2}{4v^2}}
        }
        .
    \end{equation}
    Gathering~\eqref{I bounded}, \eqref{II bounded} and~\eqref{III bounded} to bound~\eqref{I}, \eqref{II} and\eqref{III}, the claim of the theorem follows by bounding \(\abs{\sigma_0} \leq \norm{P}_1\) and setting
    \begin{equation*}
        \beta = \frac{\gamma}{2 \, \sqrt{d(X,Y)}}
        ,
    \end{equation*}
    which makes all exponents proportional to \(d(X,Y)\), and using that \(d(X,Y) \geq 1\) in the prefactors.
\end{proof}


\section{Putting it all together: the quantum Hall effect}
\label{sec:QHE}

We briefly recall the setting of~\cite{BBRF2020rational} applied to the quantum Hall effect.
The lattice is a sequence of discrete tori \(\Lambda_L = (\Z/L\Z)^2\).
The Hamiltonian is given by a finite range interaction \(\Phi\) that is invariant under a strictly local \(U(1)\)-action.
This means that there is a family \(q_x = q_x^*\in\alg_x\) with integer spectrum and such that \(\commutator{Q_{\Lambda_L},\Phi(Z)} = 0\) for all \(Z\subset\Lambda_L\).
The \(q_x\) are the \emph{local charges}.
The \enquote{ground state space} is the range of a spectral projection \(P_L\) of \(H_{\Lambda_L}\) whose dimension is constant and equal to \(p\) for all \(L\) large enough.
Moreover, \(H_{\Lambda_L}\) is assumed to satisfy the \nameref{assumption:gap}, uniformly in \(L\) for \(L\) large enough.

The proof of quantization of the Hall conductance in~\cite{BBRF2020rational} relies heavily on the inverse Liouvillian on the one hand, and on clustering on the other hand.
The inverse Liouvillian is used to construct a unitary \(U_L\) describing a magnetic flux threading and its locality allows for the definition of a charge transport operator \(T_L\) across a fiducial line of the torus.
Replacing the exact inverse Liouvillian by the almost inverse Liouvillian introduced in Section~\ref{sec:almost-inverse-Liouvillian} yields a unitary \(U_L^\beta\) and in turn an exponentially localized charge transport operator \(T^\beta_L\).
This improved localization and the exponential clustering of Section~\ref{sec:clustering} yield the following.

\begin{theorem}
    Let \(\beta = L^{-1/2}\).
    There is an integer \(n_L\in\Z\) and constants \(C\), \(c>0\) such that
    \begin{equation*}
        \abs[\big]{n_L - \trace[\big]{P_L \, T_L^\beta}}
        \leq C \, \e^{-c L}
    \end{equation*}
    for all~\(L\).
    If, moreover, the sequence of states \(p^{-1}\Tr(P_L(\cdot))\) is convergent, then \(n_L = n\) for \(L\) large enough and \(2 \, \pi \, \kappa = \frac{n}{p}\), where \(\kappa\) is the Hall conductance.
\end{theorem}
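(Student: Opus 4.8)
The plan is to run the finite-volume quantization argument of \cite{BBRF2020rational} with the exact inverse Liouvillian $\calI_H$ replaced throughout by the Gaussian almost inverse Liouvillian $\calI_{H,\beta}$ of \cref{sec:almost-inverse-Liouvillian}, keeping careful track of the two effects of this trade-off: exactness of the spectral mapping is lost, at cost $\e^{-\gamma^2/(4\beta^2)}$ controlled by \cref{prop:almost-inverse-Liouvillian-bound} and \cref{lem:bound-almost-inverse-Liouvillian-and-inverse-Liouvillian}, while the construction becomes exponentially local, with the gain quantified by \cref{lem:locality-inv-Liouvillian-beta} and \cref{lem:interaction-for-generator-of-almost-spectral-flow}. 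First I would set up the flux insertion exactly as in \cite{BBRF2020rational}: a smooth family $H_{\Lambda_L}(\phi)$, $\phi\in\intervalcc{0,2\pi}$, threading flux $\phi$ through one cycle of the torus, with $H_{\Lambda_L}(2\pi)=W_L\,H_{\Lambda_L}(0)\,W_L^*$ for a product $W_L$ of on-site $U(1)$ rotations. Since the \nameref{assumption:gap} holds uniformly along this loop, $\calI_{H_{\Lambda_L}(\phi),\beta}\paren[\big]{\partial_\phi H_{\Lambda_L}(\phi)}$ — assembled term by term with the distance-dependent width $\beta_{X,Z}$ of \eqref{eq:definition-beta-X-Z} relative to a fiducial line $X$ — generates a unitary $U_L^\beta$ that differs from the exact flux-threading unitary of \cite{BBRF2020rational} by $O\paren[\big]{\e^{-\gamma^2/(4\beta^2)}}$ and is given by an interaction with exponential tails of rate $b(\beta)$ around the line. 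Composing with $W_L$ and with the charge on a half torus and regularizing as in \cite{BBRF2020rational} then yields the exponentially localized charge transport operator $T_L^\beta$.

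Next I would establish $\abs{n_L-\trace{P_L\,T_L^\beta}}\le C\,\e^{-cL}$. Take $n_L\in\Z$ to be the finite-volume charge transport index of \cite{BBRF2020rational} ($p$ times the rational index there), which is an exact integer because threading flux $2\pi$ closes up — $\e^{2\pi\I Q}=\unit$ on integer-valued charges — and $\dim\mathrm{Ran}(P_L)=p$ is constant. Comparing $\trace{P_L\,T_L^\beta}$ with $n_L$ then splits into a spectral discrepancy, arising wherever \cite{BBRF2020rational} used the exactness $\calI_H\circ\calL_H=\id$ on off-diagonal operators — now replaced by \cref{prop:almost-inverse-Liouvillian-bound}, at cost $\e^{-\gamma^2/(4\beta^2)}$ — and a locality discrepancy, arising from the approximate factorization of $T_L^\beta$ across the two fiducial lines bounding the half torus. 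The latter is precisely where \cref{thm:exponential clustering} is used, and crucially in the form proved here: it tolerates the finite-volume splitting $\diam{\sigma_0}\le\Delta$ of the ground-state energies and still decays exponentially in the separation of the supports, yielding an error $\e^{-cL}$ with $c$ fixed by the bulk gap. Choosing $\beta=L^{-1/2}$ turns the spectral error into $\e^{-\gamma^2L/4}$; combined with the locality estimates this is the right balance, and summing the two contributions gives $C\,\e^{-cL}$.

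Finally, under the hypothesis that $p^{-1}\Tr(P_L(\cdot))$ converges, the identification $2\pi\kappa=\tfrac{n}{p}$ proceeds as in \cite{BBRF2020rational}: the limit is a gapped ground state of the infinite system whose charge transport index equals $\kappa$ by the argument carried out there, and $\calI_{H,\beta}\to\calI_H$ as $\beta\to0$ by \cref{lem:bound-almost-inverse-Liouvillian-and-inverse-Liouvillian}, so that $p^{-1}\trace{P_L\,T_L^\beta}\to 2\pi\kappa$; since each $n_L$ is an integer within $C\e^{-cL}$ of this convergent sequence, the $n_L$ are eventually equal to a single integer $n$ and $2\pi\kappa=\tfrac{n}{p}$.

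The step I expect to be the main obstacle is the scale balancing in the second paragraph. Taking $\beta$ smaller shrinks the spectral error but inflates the localization length $\sim 4v^2/\beta^2$ of the interaction generating $U_L^\beta$, so that once it exceeds the torus circumference the factorization of $T_L^\beta$ across the two fiducial lines can no longer be read off from the bare decay of that interaction; at $\beta=L^{-1/2}$ these two lengths are of the same order, and one must instead obtain the $\e^{-cL}$ bound from a telescoping over concentric shells around a fiducial line combined with \cref{thm:exponential clustering}. A secondary point is to verify that every place in the integrality argument of \cite{BBRF2020rational} that invoked exactness of $\calI_H$ is in fact covered by the quantitative estimates of \cref{prop:almost-inverse-Liouvillian-bound} and \cref{lem:bound-almost-inverse-Liouvillian-and-inverse-Liouvillian}, so that no hidden use of $\calI_H\circ\calL_H=\id$ survives.
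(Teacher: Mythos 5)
Your overall scheme—replace $\calI_H$ with $\calI_{H,\beta}$, control spectral errors via \cref{prop:almost-inverse-Liouvillian-bound} and \cref{lem:bound-almost-inverse-Liouvillian-and-inverse-Liouvillian}, control locality via \cref{lem:locality-inv-Liouvillian-beta} and \cref{thm:exponential clustering}, and take $\beta = L^{-1/2}$—matches the paper, and your final paragraph on convergence and the identification $2\pi\kappa = n/p$ is essentially the paper's argument. But the concrete route through the estimates is different, and your own ``main obstacle'' discussion contains a gap.

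First, the construction of $U_L^\beta$. The paper does not parametrize a flux-threading loop $\phi\mapsto H_{\Lambda_L}(\phi)$ and run the almost spectral flow of Section~\ref{sec:almost-spectral-flow} with the distance-modulated width $\beta_{X,Z}$. Instead it uses the dressed half-torus charge $\overline Q_\Uhalf^\beta = Q_\Uhalf - \calI_{H,\beta}\circ\calL_H(Q_\Uhalf)$ with a single, constant width $\beta = L^{-1/2}$, and defines $U^\beta$ as the factor of $\e^{2\pi\I\overline Q_\Uhalf^\beta}$ localized near one boundary strip, which exists because $\calL_H(Q_\Uhalf)$ splits into two strictly localized strip contributions by the $U(1)$-invariance and finite range of $\Phi$. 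The adaptive $\beta_{X,Z}$ is purely a device in the proof of \cref{thm:almost-spectral-flow} for comparing $\alpha^\beta$ to the exact $\alpha$; it is never invoked in the quantum Hall section, and dragging it in here would needlessly entangle the construction of $U_L^\beta$ with a choice of observable support $X$.

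Second—and this is the substantive gap—your resolution of the scale obstacle is not the right one. You correctly notice that the exponent $b(\beta)=\beta^2/(4v^2)$ appearing in \eqref{eq:alternative locality} becomes $\sim 1/L$ at $\beta = L^{-1/2}$, which at separations $d\sim L$ would give only $O(1)$ factorization errors, and you propose to rescue this by a shell-telescoping argument. That is not what the paper does, and it is not needed. The point is that the unoptimized \eqref{eq:lem-locality-inv-Liouvillian-beta-inf} with $T = d/(2v)$ gives the Gaussian-filter contribution $\e^{-\beta^2 T^2} = \e^{-\beta^2 d^2/(4v^2)}$, which is \emph{quadratic} in $d$, and the passage to \eqref{eq:alternative locality} deliberately throws that quadratic decay away. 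At $\beta = L^{-1/2}$ and $d\sim L$ one gets $\e^{-\beta^2 d^2/(4v^2)} = \e^{-L/(4v^2)}$: exponentially small, and this is precisely why $\beta = L^{-1/2}$ is the right scaling—it is the choice that balances the spectral error $\e^{-\gamma^2/(4\beta^2)} = \e^{-\gamma^2 L/4}$ from \cref{prop:almost-inverse-Liouvillian-bound} against the quadratic-in-$d$ Gaussian locality. Your worry that ``once the localization length exceeds the torus circumference the factorization can no longer be read off'' is resolved by retaining the $d^2$ exponent, not by telescoping shells.

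Finally, the integrality step. The paper's proof follows the original argument in~\cite{BBRF2020rational} through the conjugated unitary $Z^\beta(\phi)$ and its factorization $Z^\beta(\phi) = Z^\beta(\phi)_\leftboundary\, Z^\beta(\phi)_\rightboundary$, uses \cref{prop:almost-inverse-Liouvillian-bound} to get $[Z^\beta(\phi),P]\expeq 0$ and \cref{thm:exponential clustering} to localize this to $[Z^\beta(\phi)_\leftboundary,P]\expeq 0$, and then extracts $n_L$ from $\det_P Z^\beta(2\pi)_\leftboundary \expeq 1$ via \eqref{eq:Zbeta}. Your sketch is in the same spirit but much vaguer about where exactness of $\calI_H\circ\calL_H$ was used; the paper identifies that only through the $Z^\beta(\phi)$ identity, which is why the argument closes cleanly with exponential rather than superpolynomial errors.
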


We now explain the arguments with a focus on the changes from using the almost inverse Liouvillian, referring to~\cite[Section~IV]{BBRF2020rational} for more details.
The geometric setting is described in~\cite[Section~2]{BRFL2021}, see in particular Figure~1 therein, where \(\eta_-\)~and~\(\nu_-\) correspond to what we will call the lower boundary of the upper half and the left boundary of the right half, respectively, of the torus.
We again drop the \(L\) dependence to simplify notations.

Similarly to~\cite[Section~2]{BRFL2021} we first construct a unitary \(U^\beta\) that models the threading of one unit of flux through the torus.
In our case it is exponentially localized near a line along the torus.
Therefore, let \(Q_\Uhalf\) be the charge on the upper half of the torus and let
\begin{equation*}
    \overline Q_\Uhalf^\beta = Q_\Uhalf-\calI_{H,\beta}\circ\calL_H(Q_\Uhalf)
    .
\end{equation*}
By the \(U(1)\)-invariance and the finite range condition of the Hamiltonian, the operator \(\calL_H(Q_\Uhalf)\) is strictly localized in two strips of finite width around the boundary of the half-torus, which we denote \(\calL_H(Q_\Uhalf)_\lowerboundary\) and \(\calL_H(Q_\Uhalf)_\upperboundary\), respectively.
The locality result, Lemma~\ref{lem:locality-inv-Liouvillian-beta}, implies that \(\calI_{H,\beta}(\calL_H(Q_\Uhalf)_\lowerboundary)\) is exponentially localized around the lower boundary.
Specifically, the choice \(\beta = L^{-1/2}\) in~\eqref{eq:alternative locality} yields a localization estimate of the form \(C \, \e^{-c L}\) for \(\dist{X,Y}\) of order \(L\), namely
\begin{equation*}
    \calI_{H,\beta}\circ\paren[\big]{\calL_H(Q_\Uhalf)_\lowerboundary}
\end{equation*}
can be approximated by an operator that is strictly localized in a strip of width \(\frac{L}{4}\) at the lower boundary (or any other smaller fraction of \(L\)), up to errors that are exponentially small in \(L\).
This and the fact that \(Q_\Uhalf\) has integer spectrum imply that the unitary \(\e^{2\pi\I\overline Q_\Uhalf^\beta}\) factorizes up to exponentially small errors into parts at the lower and upper boundary.
We let
\begin{equation}
    \label{eq:U beta}
    U^\beta = \paren[\big]{\e^{2\pi\I\overline Q_\Uhalf^\beta}}_\lowerboundary
\end{equation}
be the factor localized on the lower boundary of the half-torus.

We briefly pause the argument to compare explicitly with~\cite{BBRF2020rational}.
There, \(\overline Q_\Uhalf\) is defined using the exact inverse Liouvillian \(\calI_{H}\) rather than \(\calI_{H,\beta}\).
As a result, \(\commutator{\overline Q_\Uhalf,P}=0\) while here \(
    \norm{\commutator{\overline Q_\Uhalf^\beta , P}}
    \leq
    C \, L^2 \, \Exp{-c\beta^{-2}}
    =
    C \, \Exp{-c L}
\), by Proposition~\ref{prop:almost-inverse-Liouvillian-bound}.
The same holds for the exponentials, and by exponential clustering, Theorem~\ref{thm:exponential clustering}, for their restrictions to the lower boundary of the half-torus.
What is more, Lemma~\ref{lem:bound-almost-inverse-Liouvillian-and-inverse-Liouvillian} implies that \(\overline Q_\Uhalf^\beta\) and \(\overline Q_\Uhalf\) are exponentially close and therefore so are \(U^\beta\) and \(U\).
As a consequence, \(U^\beta\) almost preserves the ground state space and almost implements \(2\pi\)-flux threading.

Next, we consider the charge transport along the torus.
Therefore, let \(Q_\Rhalf\) be the charge on the right half of the torus.
By charge conservation and again the locality lemma, the operator \((U^\beta)^* \, Q_\Rhalf \, U^\beta - Q_\Rhalf\) decomposes in two contributions at each boundary of the right half-torus.
We define the operator of charge transport as the left one
\begin{equation}
    \label{eq:T}
    T^\beta
    =
    \paren*{(U^\beta)^* \, Q_\Rhalf \, U^\beta - Q_\Rhalf}_\leftboundary
    .
\end{equation}
More precisely, \(\trace[\big]{P_L \, T_L^\beta}\) measures the amount of charge transported in the ground state across one fiducial line across the torus after threading one unit of flux in the torus -- up to exponentially small errors in \(L\), because the splittings we used are only unique up to exponentially small terms.

With these definitions, the proof of quantization of this quantity follows exactly the original argument of~\cite[Section~IV.B]{BBRF2020rational}.
We only recall the main steps and illustrate where the results of the previous sections yield improved bounds.
The unitary
\begin{equation*}
    Z^\beta(\phi)
    =
    (U^\beta)^*
    \, \e^{\I\phi\overline Q_\Rhalf^\beta}
    \, U^\beta
    \, \e^{-\I\phi\overline Q_\Rhalf^\beta}
\end{equation*}
factorizes as \(Z^\beta(\phi) = Z^\beta(\phi)_\leftboundary \, Z^\beta(\phi)_\rightboundary\) as in the discussion above.
Note that the \enquote{\(-\)} and \enquote{\(+\)} in~\cite{BBRF2020rational} are the analogues of \enquote{\(\leftboundary\)} and \enquote{\(\rightboundary\)} here, respectively.
Since \(\calI_{H,\beta}\) is an almost inverse Liouvillian, Proposition~\ref{prop:almost-inverse-Liouvillian-bound}, \(Z^\beta(\phi)\) commutes with \(P\) (in operator norm) up to exponentially small errors in \(\beta^{-2} = L\).
By exponential clustering, Theorem~\ref{thm:exponential clustering}, it follows that \(\commutator{Z^\beta(\phi)_\leftboundary,P}\expeq 0\), where we use the notation \(\expeq\) to indicate equality up to exponentially small errors in \(L\).
From there on, the argument runs without a change, but the errors are always exponential rather than superpolynomial, yielding that
\begin{equation}
    \label{eq:Zbeta}
    P \, Z^\beta(\phi)_\leftboundary \, P
    \expeq
    \e^{\I\phi \paren[\Big]{P \paren[\big]{T^\beta - ((U^\beta)^* K^\beta_\leftboundary U^\beta - K^\beta_\leftboundary)}P + \overline Q_\Rhalf^\beta}}
    \, \e^{\I\phi \overline Q_\Rhalf^\beta}
    \, P
    ,
\end{equation}
where \(
    K^\beta_\leftboundary
    =
    \calI_{H,\beta}\circ \paren[\big]{\calL_H(Q_\Rhalf)_\leftboundary}
\).
At \(\phi = 2\pi\), an independent argument yields that \(\det_P Z^\beta(2\pi)_\leftboundary \expeq 1\), where the determinant is on the range of \(P\).
With this, the claim that \(\trace[\big]{P \, T^\beta}\) is exponentially close to an integer follows from~\eqref{eq:Zbeta} by computing the trace of the exponent and using the unitary invariance of the trace.

Finally, if the sequence of states is convergent, then \(p^{-1} \, \trace{P_L \, T_L}\) is convergent because \(T_L\) is a sufficiently local operator, see~\cite[Corollary~2.3]{bachmann2020many}.
Note that \(T_L\) is obtained with the exact inverse Liouvillian here.
The Laughlin argument then implies that the limit of \(p^{-1} \, \trace{P_L \, T_L}\) is equal to \(2\,\pi\,\kappa\) where \(\kappa\) is the Hall conductance, see~\cite[Theorem~3.2]{bachmann2020many}.
By Lemma~\ref{lem:bound-almost-inverse-Liouvillian-and-inverse-Liouvillian}, we further have that \(\norm{T^\beta_L - T_L} \to 0\) as \(L\to\infty\), again with \(\beta^{-2} = L\).
We conclude that \(p^{-1} \, \trace[\big]{P_L (T^\beta_L - T_L)}\) converges to \(0\), and hence that \(p^{-1} \, \trace{P_L \, T^\beta_L}\) is convergent.
In particular the sequence \(\paren[\big]{\frac{n_L}{p}}_{L}\) is eventually constant and equal to \(2\pi\kappa\).

\AtNextBibliography{\small}
\printbibliography[heading=bibintoc]

\end{document}